\theoremstyle{plain}
\newtheorem{thm}{Theorem}[section]
\newtheorem{lem}[thm]{Lemma}
\newtheorem{defn}{Definition}[section]
\theoremstyle{remark}
\newtheorem*{rem}{Remark}
\begin{document}
\title{Enhancing Secrecy Rate Region for Recent Messages for a Slotted Multiple  Access Wiretap Channel to Shannon Capacity Region}

\author{Shahid M Shah,~\IEEEmembership{Student~Member,~IEEE,}
	and~Vinod~Sharma,~\IEEEmembership{Senior~Member,~IEEE}
\thanks{Part of the paper was presented in IEEE Wireless Communication and Networking Conference (WCNC), March 2015, New Orleans, LA, USA.}
\thanks{Shahid M Shah and Vinod Sharma are with Electrical communication Department, Indian Institute of Science, Bangalore, India.}}
\maketitle




\begin{abstract}
Security constraint results in \textit{rate-loss} in wiretap channels. In this paper we propose a coding scheme for two user Multiple Access Channel with Wiretap (MAC-WT), where previous messages are used as a key to enhance the secrecy rates of both the users until we achieve the usual capacity region of a Multiple Access Channel (MAC) without the wiretapper (Shannon capacity region). With this scheme  all the messages transmitted in the recent past are secure with respect to all the information of the eavesdropper till now. To achieve this goal we introduce secret key buffers at both the users, as well as at the legitimate receiver (Bob). Finally we consider a fading MAC-WT and show that with this coding/decoding scheme we can achieve the capacity region of a fading MAC (in ergodic sense).
\end{abstract}

\begin{IEEEkeywords}
Physical layer security, Multiple Access Channel, Wiretap Channel, Strong Secrecy, Resolvability, Rate loss

\end{IEEEkeywords}


Wyner in his seminal paper \cite{wyner1975wire} on a degraded wiretap channel proved that by assigning multiple codewords to a single message, we can achieve reliability as well as security in a point to point channel. He characterized secrecy capacity for this channel. After a couple of decades of this work when wireless revolution began, researchers started extending Wyner's coding scheme (wiretap coding) in different directions. A single user fading wiretap channel was studied in \cite{gopala2008secrecy}, \cite{bloch2008wireless}. A secret key buffer was used in \cite{gungor2013secrecy} to mitigate the fluctuations in the secrecy capacity due to variations in the channel gain with time.

A multiple access channel with security constraints was studied in \cite{liang2008multiple} and \cite{tekin2008gaussian}. In \cite{liang2008multiple} the transmitting users treat each other as eavesdroppers and an achievable secrecy rate region is characterized. In some special cases the secrecy capacity region is also found. In \cite{tekin2008gaussian}  the authors consider the eavesdropper to be \textit{listening} at the receiving end. The authors provide an achievable secrecy-rate region. The secrecy-capacity region is not known for such a MAC. The same authors also studied a fading MAC with full channel state information (CSI) of Eve known at the transmitters. In \cite{shah2012achievable} this work is extended to the case when the CSI of Eve is not known at the transmitters. For a detailed review on information theoretic security, see \cite{liang2009information}, \cite{bloch2011physical}, and \cite{liu2010securing}.

In all these works a notion of weak secrecy was used, i.e., if $W$ is the message transmitted and Eve receives $Z^n$ for a codeword of length $n$ channel uses, then $I(W;Z^n)/n \rightarrow 0$, as $n\rightarrow \infty$. This notion of secrecy is not  stringent enough in various cases \cite{bloch2011physical}. Maurer in \cite{maurer1993secret} proposed a notion of \textit{strong secrecy}: $I(W;Z^n) \rightarrow 0$ as $n\rightarrow \infty$. For a point to point channel, he showed that it can be achieved without any change in secrecy capacity. Since then other methods have been proposed for  achieving strong secrecy \cite{devetak2005private}, \cite{csiszar2011information} and \cite{bloch2011strong}. The methods of \cite{devetak2005private} and \cite{bloch2011strong}  have been used to obtain strong secrecy for a MAC-WT in \cite{wiese2013strong} and \cite{yassaee2010multiple} respectively.

In all these works we observe that security is achieved at the cost of transmission rate. For a single user AWGN wiretap channel if $C_b$ is the capacity of the legitimate receiver (Bob) and $C_e$ is the capacity of Eve's channel, then the secrecy capacity of this channel is $C_s=(C_b-C_e)^+$, where $(x)^+=\max(0,x)$  (\cite{leung1978gaussian}). In recent years some work has been done to mitigate the secrecy-rate loss. Feedback channel is used in \cite{ardestanizadeh2009wiretap} and \cite{lai2008wiretap} to enhance the secrecy rate, and under certain conditions the authors prove that the secrecy capacity can approach the main channel capacity. In \cite{kang2010wiretap} the authors assume that the transmitter (Alice) and Bob have access to a secret key, and then they propose a coding scheme which utilizes that key to enhance the secrecy rate. Secure Multiplex scheme has been proposed in \cite{kobayashi2013secure} which achieves Shannon channel capacity for a point to point wiretap channel. In this model multiple messages are transmitted. The authors show that the mutual information of the currently transmitted message with respect to (w.r.t.) all the information received by Eve goes to zero as the codeword length $n\rightarrow \infty$.

Shah et al.  in \cite{shah2013previous} propose a simple coding scheme, without any feedback channel or access to some key, and enhance the secrecy capacity of a wiretap channel to the Shannon capacity of the main channel. In this work also, only the message currently being transmitted is secure w.r.t. all the information possessed by Eve. 
In \cite{shah2014achieving_MAC} we extended the coding scheme of \cite{shah2013previous} to a multiple access wiretap channel and showed that we can achieve Shannon capacity region of the MAC as the secrecy rate region, while keeping currently transmitted message secure w.r.t. all the information of Eve.
In this paper we extend the coding/decoding schemes of \cite{shah2013previous} and \cite{shah2014achieving_MAC} to a multiple access wiretap channel and prove that we can achieve Shannon capacity region of the MAC as the secrecy-rate region while keeping all recent messages secure w.r.t. the information possessed by Eve till present. Finally we achieve the same for a fading MAC-WT.

 Rest of the paper is organised as follows. In Section \ref{section_channel_model} we define the channel model and recall some previous results which will be used in this paper. We extend our coding/decoding scheme of \cite{shah2013previous} to two user discrete memoryless MAC-WT (DM-MAC-WT) in Section \ref{section_enhancing_rate} and prove the achievability of Shannon capacity region, under the security constraint that only the currently transmitted message is secured w.r.t. all the data received by Eve. In Section \ref{section_mac_with_buffer} we consider a two user DM-MAC-WT where each user, receiver, as well as Eve have infinite length buffers to store previous messages. We propose a coding scheme  to enhance the secrecy-rate region to Shannon capacity region of the usual MAC, this time with security constraint that \textit{all recent} messages are secure w.r.t. all the information possessed by Eve. In Section \ref{section_fading_mac} we consider a two user fading MAC-WT and extend the coding scheme of previous sections to enhance the secrecy-rate region of the fading MAC-WT to the Shannon Capacity region of the MAC in the ergodic sense. Section \ref{section_conclusion} concludes the paper. The Appendix at the end contains several lemmas used in the proofs of the main theorems.

\par
In this paper random variables will be denoted by capital letters $X,Y,Z$ etc., vectors will be denoted with upperbar letters, e.g., $\overline{X}=(X_1,\ldots, X_n)$, scalar constants will be denoted by lower case letters $a,b$ etc.
\section{Multiple Access Wiretap Channel}
\label{section_channel_model}
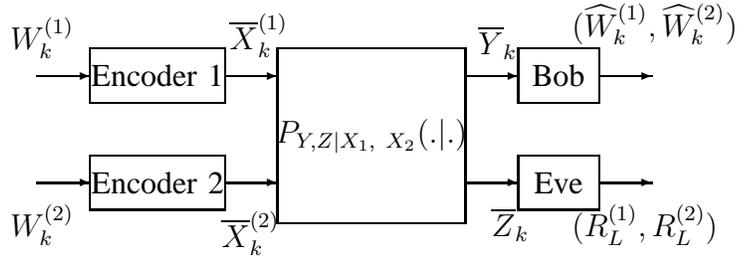
\begin{figure}[h]
	\setlength{\unitlength}{0.14in} 
	\centering 
	\begin{picture}(32,15) 
	\put(2,8.5){\framebox(5,2){Encoder 1}}
	\put(2,4.5){\framebox(5,2){Encoder 2}}
	\put(9,4){\framebox(7,6.5){$P_{Y,Z\lvert X_1,~X_2}(.\lvert .)$}}
	\put(18,8.5){\framebox(3,2){Bob}}
	\put(18,4.5){\framebox(3,2){Eve}}
	\put(0,9.5){\vector(1,0){2}}\put(7,9.5){\vector(1,0){2}}
	\put(7,5.5){\vector(1,0){2}}
	\put(0,5.5){\vector(1,0){2}}
	\put(16,9.5){\vector(1,0){2}}\put(16,5.5){\vector(1,0){2}}
	\put(21,9.5){\vector(1,0){2}}
	\put(21,5.5){\vector(1,0){2}}
	\put(-1,10.5) {$W^{(1)}_k$}
	\put(-1,3.5) {$W^{(2)}_k$}
	\put(20,11) {$(\widehat{W}^{(1)}_k,\widehat{W}^{(2)}_k)$}
	\put(20,3.5) {$(R_L^{(1)},R_L^{(2)})$}
	\put(7.2,10.5)
	{$\overline{X}^{(1)}_k$} \put(6.9,3.1){$\overline{X}^{(2)}_k$}\put(16.5,10.3) {$\overline{Y}_k$}
	\put(17,3.5){$\overline{Z}_k$}
	\end{picture}
	\caption{Discrete Memoryless Multiple Access Wiretap Channel} 
	\label{fig:DM_MAC} 
\end{figure}

A discrete memoryless multiple access channel with a wiretapper and two users is considered (Fig. \ref{fig:DM_MAC}). The channel is represented by transition probability matrix $p(y,z\lvert x_1,x_2)$ where $x_i \in \mathcal{X}_i$, is the channel input from user $i$, $i=1,2$, $y\in\mathcal{Y}$ is the channel output to Bob and $z\in \mathcal{Z}$ is the channel output to Eve. The sets $\mathcal{X}_1, \mathcal{X}_2, \mathcal{Y}, \mathcal{Z}$ are finite. The two users want to send messages $W^{(1)}$ and $W^{(2)}$ to Bob reliably, while keeping Eve ignorant about the messages.

 \begin{defn}
 	For a MAC-WT, a $(2^{nR_1},2^{nR_2},n)$ codebook consists of (1) message sets $\mathcal{W}^{(1)}$ and $\mathcal{W}^{(2)}$ of cardinality $2^{nR_1}$ and $2^{nR_2}$, (2) messages $W^{(1)}$ and $W^{(2)}$, which are uniformly distributed over the corresponding message sets $\mathcal{W}^{(1)}$ and $\mathcal{W}^{(2)}$ and are independent of each other, (3) two stochastic encoders,
 \end{defn} 
\begin{equation}
f_i:\mathcal{W}^{(i)} \rightarrow \mathcal{X}_i^n,~~i=1,2,
\end{equation}
and (4) a decoder at Bob, 
\begin{equation}
g:\mathcal{Y}^n\rightarrow \mathcal{W}^{(1)} \times \mathcal{W}^{(2)}.
\end{equation}
 The decoded messages are denoted by $(\widehat{W}^{(1)},\widehat{W}^{(2)})$.

The average probability of error at Bob is 
\begin{equation}
P_e^{(n)}\triangleq P\left\lbrace \left(\widehat{W}^{(1)},\widehat{W}^{(2)}\right) \neq \left(W^{(1)},W^{(2)}\right) \right\rbrace,
\end{equation}
and leakage rate at Eve is 
\begin{equation}
R_L^{(n)}=\frac{1}{n}I(W^{(1)},W^{(2)};Z^n).  \label{collective_leakage_def}
\end{equation}
\textit{Leakage Rate}: In \cite{tekin2008gaussian} the authors have defined two types of security requirements depending upon the trust of the transmitting users on each other. If each user is conservative such that when the other user is transmitting then it may compromise with Eve and provide Eve with its codeword, then  \textit{individual leakage} constraints
\begin{equation}
R_{L,1}^{(n)}=\frac{1}{n}I(W^{(1)};Z^n\lvert \overline{X}^{(2)}),  \label{leakage_user_1}
\end{equation}
\begin{equation}
R_{L,2}^{(n)}=\frac{1}{n}I(W^{(2)};Z^n\lvert \overline{X}^{(1)}),   \label{leakage_user_2}
\end{equation}
are relevant, where $\overline{X}^{(i)}$ denotes the codeword for user $i$.

In a scenario where users trust each other, \textit{collective leakage} 
\begin{equation}
R_L^{(n)}=\frac{1}{n}I(W^{(1)},W^{(2)};Z^n).  \label{collective_leakage}
\end{equation}
is relevant.
Since, $W^{(1)} \perp W^{(2)}$ and hence also $\overline{X}^{(1)} \perp \overline{X}^{(2)}$ where $X\perp Y$ denotes that random variable $X$ is independent of $Y$,
\begin{align}
nR_L^{(n)}&=I(W^{(1)},W^{(2)};Z^n)  \nonumber \\
&=I(W^{(1)};Z^n)+I(W^{(2)};Z^n\lvert W^{(1)}) \nonumber \\
&=H(W^{(1)})-H(W^{(1)}\lvert Z^n)+H(W^{(2)})-H(W^{(2)}\lvert Z^n,W^{(1)}) \nonumber \\
&\leq H(W^{(1)}\lvert X_2^n)-H(W^{(1)}\lvert Z^n,X_2^n)+H(W^{(2)}\lvert X_1^n) -H(W^{(2)}\lvert Z^n,X_1^n) \nonumber \\
&=I(W^{(1)};Z^n\lvert X_2^n)+I(W^{(2)};Z^n\lvert X_1^n) \nonumber \\
&=nR_{L,1}^{(n)}+nR_{L,2}^{(n)}   \label{leakage_2}
\end{align}
and hence, if individual leakage rates are small then so is the collective leakage rate. In this paper we consider the secrecy notion (\ref{collective_leakage}).


\begin{defn}
	 The secrecy-rates $(R_1,R_2)$ are achievable if there exists a sequence of codes $(2^{nR_1},2^{nR_2},n)$ with $P^{(n)}_e\rightarrow 0$ as $n\rightarrow \infty$ and
\begin{align}
\limsup_{n\rightarrow \infty} R_{L,i}^{(n)} =0,\quad \text{for}~i=1,2.
\end{align}
The secrecy-capacity region is the closure of the convex hull of achievable secrecy-rate pairs $(R_1,R_2)$.
\end{defn}


In \cite{tekin2008gaussian}, a coding scheme to obtain the following rate region was proposed.

\begin{thm}
	 Rates $(R_1,R_2)$ are achievable with $\limsup_{n\rightarrow \infty}R_{L,i}^{(n)}=0,~i=1,2$, if there exist independent random variables $(X_1,X_2)$ as channel inputs satisfying
\begin{align}
R_1&< I(X_1;Y|X_2)-I(X_1;Z),  \nonumber \\
R_2&< I(X_2;Y|X_1)-I(X_2;Z),  \nonumber \\
R_1+R_2&< I(X_1,X_2;Y)-I(X_1;Z)-I(X_2;Z),  \label{MAC_WT_Secrecy_region}
\end{align}
where $Y$ and $Z$ are the corresponding symbols received by Bob and Eve.$~~ \square$
\end{thm}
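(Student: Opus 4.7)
The plan is to adapt Wyner's stochastic wiretap encoding to the two-user MAC: each user employs a bin-structured codebook in which a private randomizer saturates Eve's view, while Bob decodes the full message--bin pair via joint-typicality decoding over the augmented MAC code. Fix $p(x_1)p(x_2)$ and small $\epsilon>0$. For $i=1,2$, I would generate $2^{n(R_i+R'_i)}$ codewords $\overline{x}_i(w_i,k_i)$ drawn i.i.d.\ from $\prod_t p(x_{i,t})$, partitioned into $2^{nR_i}$ bins indexed by $w_i$ each of size $2^{nR'_i}$, and pick the randomization rate $R'_i=I(X_i;Z)-\epsilon$. To transmit $w_i$, user $i$ draws $K_i$ uniformly from $\{1,\ldots,2^{nR'_i}\}$ and sends $\overline{x}_i(w_i,K_i)$.

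For reliability at Bob I would invoke the standard two-user MAC joint-typicality argument on the augmented rates: the averaged error probability vanishes whenever $R_i+R'_i<I(X_i;Y|X_j)$ for $i\neq j$ and $R_1+R'_1+R_2+R'_2<I(X_1,X_2;Y)$; substituting $R'_i=I(X_i;Z)-\epsilon$ and letting $\epsilon\downarrow 0$ recovers exactly the three inequalities in (\ref{MAC_WT_Secrecy_region}). The secrecy analysis is the main obstacle. To bound $R^{(n)}_{L,1}=\frac{1}{n}I(W^{(1)};Z^n|\overline{X}^{(2)})$ I would condition on $\overline{X}^{(2)}$, which by independence $X_1\perp X_2$ effectively reduces Eve's view to a single-user wiretap channel $p(z|x_1,\overline{x}^{(2)})$ with randomization rate $R'_1\approx I(X_1;Z)=I(X_1;Z|X_2)$. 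Writing
\[
H(W^{(1)}|Z^n,\overline{X}^{(2)})=H(W^{(1)},K_1|Z^n,\overline{X}^{(2)})-H(K_1|W^{(1)},Z^n,\overline{X}^{(2)}),
\]
I would lower-bound the first summand by $n(R_1+R'_1)-nI(X_1;Z|X_2)=nR_1-n\epsilon$ using memorylessness of the channel together with the independence of $(W^{(1)},K_1)$ from $\overline{X}^{(2)}$, and upper-bound the second by $n\delta_n$ via a Wyner-style list-decoding/Fano argument: inside the bin of $W^{(1)}$, the $2^{nR'_1}$ codewords form a random code of rate just below $I(X_1;Z|X_2)$, so Eve, knowing $W^{(1)}$ and $\overline{X}^{(2)}$, can identify $K_1$ from $Z^n$ with vanishing error. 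Combining gives $R^{(n)}_{L,1}\to 0$; swapping the roles of the users handles $R^{(n)}_{L,2}$.

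The technically delicate point will be verifying that the list-decoding (equivocation) bound holds uniformly over the realized codebook and the realization of $\overline{X}^{(2)}$, and ensuring that the independence of the two users' code constructions prevents correlations through $Z^n$ from inflating Eve's effective per-user channel capacity beyond $I(X_i;Z)$; this is exactly where $X_1\perp X_2$ is used. The convex hull claimed in the theorem then follows by standard time-sharing over input product distributions $p(x_1)p(x_2)$.
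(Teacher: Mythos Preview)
The paper does not supply its own proof of this theorem: it is quoted from \cite{tekin2008gaussian}, and the $\square$ at the end of the statement signals that no argument is reproduced here. Your sketch is therefore not competing with anything in the present paper but with the original Tekin--Yener construction, and at the structural level it matches that construction --- independently generated binned codebooks, joint-typicality decoding at Bob over the augmented rates $R_i+R'_i$, and an equivocation/Fano analysis at Eve.

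There is, however, one genuine slip in your secrecy step. You set $R'_i=I(X_i;Z)-\epsilon$ and then assert, when conditioning on $\overline{X}^{(2)}$, that ``$R'_1\approx I(X_1;Z)=I(X_1;Z|X_2)$.'' Independence of $X_1$ and $X_2$ does \emph{not} give this equality; in fact $I(X_1;Z|X_2)=I(X_1;Z)+I(X_1;X_2|Z)\ge I(X_1;Z)$, with strict inequality whenever $Z$ correlates the two inputs, as it typically will in a MAC. Your lower bound on $H(W^{(1)},K_1|Z^n,\overline{X}^{(2)})$ loses $nI(X_1;Z|X_2)$, not $nI(X_1;Z)$, so with your choice of $R'_1$ the residual individual leakage is $n\bigl(I(X_1;Z|X_2)-I(X_1;Z)\bigr)+o(n)$, which need not vanish. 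The Fano part is fine --- since $R'_1<I(X_1;Z)\le I(X_1;Z|X_2)$, Eve can indeed decode $K_1$ given $(W^{(1)},\overline{X}^{(2)})$ --- but that only sharpens the mismatch: the randomization rate sits strictly below what the conditioned channel can resolve. To make $R_{L,i}^{(n)}\to 0$ as stated you would need $R'_i$ near $I(X_i;Z|X_{\bar i})$, which changes the reliability constraints; the choice $R'_i\approx I(X_i;Z)$ closes the books only via a \emph{collective} equivocation analysis of $(W^{(1)},W^{(2)})$ jointly, not via the per-user decomposition you wrote down.
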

\quad The secrecy capacity region for a MAC-WT is not known. If the secrecy constraint is not there then the capacity region for a MAC is obtained from the convex closure of the regions in Theorem 1 without the terms $I(X_i;Z),~i=1,2$ on the right side of (\ref{MAC_WT_Secrecy_region}) (Fig.\ref{MAC_region_compare}) \cite{ahlswede1973multi}. In the next section we show that we can attain the capacity region of a MAC even when some secrecy constraints are satisfied.

\begin{figure}[htb]
\centering
\begin{tikzpicture}
	\draw[arrows=->](0,0)--(8,0);
	\draw[arrows=->](0,0)--(0,5);
	\draw[red, thick](0,4)--(3,4);
	\draw[red, thick](6,0)--(6,2);
	\draw[red, thick](3,4)--(6,2);
	\draw[blue, thick](0,2.5)--(2,2.5);
	\draw[blue, thick](4.5,0)--(4.5,1);
	\draw[blue, thick](2,2.5)--(4.5,1);
	\node[text width=4cm, font=\small] at (2.2,3.3) {$I(X_1;Y\lvert X_2)-I(X_1;Z)$};
	\draw[arrows=->](0.4,3.0)--(0.1,2.7);
	\node[text width=4cm, font=\small] at (2.2,1){$I(X_2;Y\lvert X_1)-I(X_2;Z)$};
	\draw[arrows=->](3.6,1)--(4.4,0.1);
	\node[text width=2cm, font=\small] at (7.2,0.7){$I(X_1;Y\lvert X_2)$};
	\draw[arrows=->] (6.5,0.52)--(6.1,0.1);
	\node[text width=2cm,font=\small] at (1.2,4.5){$I(X_2;Y\lvert X_1)$};
	\draw[arrows=->](0.3,4.4)--(0.1,4.1);
	\node[text width=1cm, font=\small] at (7.5,-0.3){$R_1$};
	\node[text width=1cm, font=\small] at (0.6,5){$R_2$};
\end{tikzpicture}
\caption{Capacity region and Secrecy Rate region of MAC}
\label{MAC_region_compare}
\end{figure}
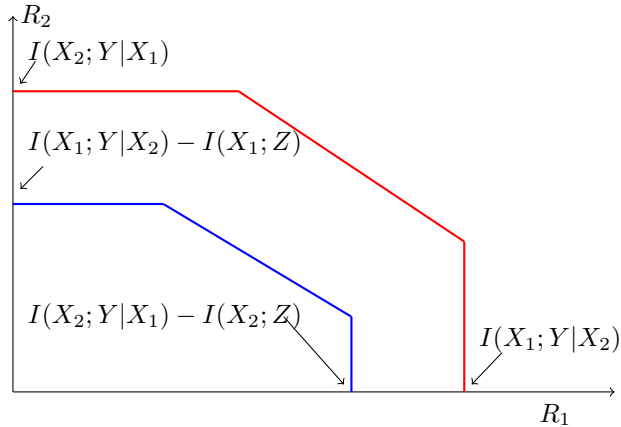

\section{Enhancing the Secrecy-Rate Region of MAC-WT}
\label{section_enhancing_rate}
In this section we extend the coding-decoding scheme of \cite{shah2013previous} for a point-to-point channel to enhance the achievable secrecy rates for a MAC-WT. We recall that in \cite{shah2013previous} the system is slotted with a slot consisting of $n$ channel uses. The first message is transmitted by using the wiretap code of \cite{wyner1975wire} in slot 1. In the next slot we use the message transmitted in slot 1 as a key along with wiretap code and transmit two messages in that slot (keeping the number of channel uses same). Hence the secrecy-rate gets doubled. We continue to use the message transmitted in the previous slot as a key and wiretap coding, increasing the transmission rate till we achieve a secrecy rate equal to the main channel capacity. From then onwards we use only the previous message as key and no wiretap coding. This scheme guarantees that the message which is currently being transmitted is secure w.r.t. all the Eve's outputs, i.e., if message $W_k$ is transmitted in slot $k$ then 
\begin{equation}
\frac{1}{n}I(W_k;\overline{Z}_1,\ldots,\overline{Z}_{k})\rightarrow 0,  \label{leakage_slot_k_1}
\end{equation}
as the codeword length $n \rightarrow \infty$, where $\overline{Z}_i$ is the data received by Eve in slot $i$.

In the following, not only we extend this coding scheme to a MAC-WT but also modify it so that it can be used to improve its secrecy criterion (\ref{leakage_slot_k_1}) and for fading channels as well. The secrecy criterion used  is the following: If user $i$ transmits message $\overline{W}_k^{(i)}$ in slot $k$, we need
\begin{equation}
I(\overline{W}_l^{(1)},\overline{W}_l^{(2)};\overline{Z}_1,\ldots,\overline{Z}_{k}) \leq n \epsilon,~\text{for}~l=1,\ldots,k, \label{leakage_slot_k_induction}
\end{equation}
for any given $\epsilon>0$. This will be strengthened to strong secrecy, $I(\overline{W}_l^{(1)},\overline{W}_l^{(2)};\overline{Z}_1,\ldots,\overline{Z}_{k}) \rightarrow 0~\text{as}~n\rightarrow \infty$ at the end of the section (See the next section for further strengthening of their criteria).  We modify message sets and encoders and decoders with respect to Section \ref{section_channel_model} as follows.

Each slot has $n$ channel uses and is divided into two parts. The first part has $n_1$ channel uses and the second $n_2$, $n_1+n_2=n$. The message sets are $\mathcal{W}^{(i)}=\{1,\ldots,2^{n_1R^s_i}\}$ for users $i=1,2$, where $(R^s_1,R^s_2)$ satisfy (\ref{MAC_WT_Secrecy_region}) for some $(X_1,X_2)$.  The encoders have two parts for both users,
\begin{align}
f^s_1:&\mathcal{W}^{(1)}\rightarrow \mathcal{X}_1^{n_1}, ~~ f^d_1:\mathcal{W}^{(1)}\times \mathcal{K}_1\rightarrow \mathcal{X}_1^{n_2} \\
f^s_2:&\mathcal{W}^{(2)}\rightarrow \mathcal{X}_2^{n_1}, ~~ f^d_2:\mathcal{W}^{(2)}\times \mathcal{K}_2\rightarrow \mathcal{X}_2^{n_2},
\end{align}
where $X_i\in \mathcal{X}_i, i=1,2$, and $\mathcal{K}_i, i=1,2$ are the sets of secret keys generated for the respective user, $f^s_i, i=1,2$ are the wiretap encoders corresponding to each user as in \cite{tekin2008gaussian} and $f_i^d, i=1,2$ are the usual deterministic encoders corresponding to each user in the usual MAC. User $i$ may transmit multiple messages from $\mathcal{W}^{(i)}$ in a slot. In the first part of each slot of $n_1$ length, one message from $\mathcal{W}^{(i)}$ may be transmitted using wiretap coding via $f_i^s$ (denoted by $\overline{W}_{k,1}^{(i)}$ in slot $k$) and in the second part multiple messages from $\mathcal{W}^{(i)}$ may be transmitted (denoted by $\overline{W}_{k,2}^{(i)}$) using messages transmitted in previous slots as keys. The overall message transmitted in slot $k$ by user $i$ is $\overline{W}_k^{(i)}=(\overline{W}_{k,1}^{(i)},\overline{W}_{k,2}^{(i)})$.

The following is our main result.

\begin{thm}
	The secrecy-rate region satisfying (\ref{leakage_slot_k_induction}) is the usual MAC region without Eve, i.e., it is the closure of convex hull of all rate pairs $(R_1,R_2)$ satisfying
\begin{align}
R_1&< I(X_1;Y|X_2), \nonumber \\
R_2&< I(X_2;Y|X_1),  \nonumber \\
R_1+R_2&< I(X_1,X_2;Y), \label{MAC_capacity_region}
\end{align}
for some independent random variables $X_1,X_2$.
\end{thm}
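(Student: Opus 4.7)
The converse coincides with the standard MAC converse and requires no security argument, so the content of the theorem lies in the achievability of every interior point of (\ref{MAC_capacity_region}) under the recent-message security criterion (\ref{leakage_slot_k_induction}). I would fix a target pair $(R_1,R_2)$ strictly inside (\ref{MAC_capacity_region}) for some product input distribution $p(x_1)p(x_2)$, and an auxiliary pair $(R_1^s,R_2^s)$ strictly inside the Tekin--Yener MAC-WT region (\ref{MAC_WT_Secrecy_region}) for the same inputs. In slot $k$, user $i$ uses the first $n_1$ channel uses to transmit one fresh message $\overline{W}_{k,1}^{(i)}\in\mathcal{W}^{(i)}$ via the wiretap encoder $f_i^s$ of Theorem 1, and the next $n_2$ channel uses to transmit $\overline{W}_{k,2}^{(i)}$ via the deterministic MAC encoder $f_i^d$ applied to the one-time-pad ciphertext $\overline{W}_{k,2}^{(i)}\oplus K_k^{(i)}$, where $K_k^{(i)}$ is a key drawn from a buffer filled by user $i$'s earlier transmissions. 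Bob maintains the identical buffer from his decoded past messages, so after MAC-decoding the two sub-slots he strips off $K_k^{(i)}$ and recovers $\overline{W}_{k,2}^{(i)}$. The first few slots transmit only via sub-slot one in order to bootstrap the buffer before sub-slot two is switched on.

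I would then prove (\ref{leakage_slot_k_induction}) by induction on $k$. The base case $k=1$ is exactly the weak-secrecy guarantee of Theorem 1 applied to sub-slot one. For the induction step, write
\begin{align*}
I(\overline{W}_l^{(1)},\overline{W}_l^{(2)};\overline{Z}_1,\ldots,\overline{Z}_k) = I(\overline{W}_l^{(1)},\overline{W}_l^{(2)};\overline{Z}_1,\ldots,\overline{Z}_{k-1}) + I(\overline{W}_l^{(1)},\overline{W}_l^{(2)};\overline{Z}_k\mid \overline{Z}_1,\ldots,\overline{Z}_{k-1}),
\end{align*}
and bound the first summand by the induction hypothesis. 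For the second summand, split $\overline{Z}_k$ into its sub-slot components $(\overline{Z}_{k,1},\overline{Z}_{k,2})$. Sub-slot one is handled directly by the MAC-WT coding theorem applied in isolation, giving a small leakage toward the current-slot messages and essentially none toward past-slot messages (since sub-slot one of slot $k$ is independent of $\overline{W}_l^{(i)}$ for $l<k$). Sub-slot two is handled by a one-time-pad argument: the key $K_k^{(i)}$ is a function of past messages which, by the induction hypothesis, are essentially uniform from Eve's viewpoint conditioned on $\overline{Z}_1^{k-1}$, so the ciphertext $\overline{W}_{k,2}^{(i)}\oplus K_k^{(i)}$ is approximately uniform on $\mathcal{W}^{(i)}$, approximately independent of both the plaintexts $\overline{W}_{k,2}^{(i)}$ and of the past-slot messages contributing to the key.

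A per-slot rate accounting gives user $i$ throughput of the form $\alpha R_i^s + (1-\alpha)\rho_i$, where $\alpha = n_1/n$ and $(\rho_1,\rho_2)$ is any point in the unconstrained MAC capacity region for $p(x_1)p(x_2)$ that sub-slot two can carry. Taking $\alpha$ small and $(\rho_1,\rho_2)$ near a corner of (\ref{MAC_capacity_region}) makes $(R_1,R_2)$ approach that corner, and ordinary time-sharing then sweeps the full region. The main obstacle I anticipate is step (b) of the induction: a perfect one-time pad leaks nothing only when the key is exactly uniform and independent of Eve's side information, whereas here the induction hypothesis provides only approximate uniformity, and slot-by-slot chaining of these approximations must not accumulate beyond $n\epsilon$. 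Controlling this statistical-distance slippage is presumably the role of the appendix lemmas the authors refer to, and it works because the rates are chosen strictly inside the relevant regions. The strengthening to strong secrecy promised at the end of the section is then immediate upon replacing the weak-secrecy wiretap code of \cite{tekin2008gaussian} in sub-slot one by a resolvability-based code of \cite{bloch2011strong} or \cite{yassaee2010multiple}, under which all slot-wise leakage bounds become exponentially small in $n$.
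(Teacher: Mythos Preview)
Your scheme is essentially the paper's: split each slot into a wiretap-coded sub-slot and a one-time-pad sub-slot, use previously transmitted messages as the pad, prove~(\ref{leakage_slot_k_induction}) by induction, and shrink $n_1/n$ to push the rate pair to the MAC boundary. Two cosmetic differences: (i) for \emph{this} theorem the paper keys slot $k$ with the immediately preceding slot's message and lets the key length (hence the sub-slot-2 rate) ramp up geometrically until it saturates at the MAC rate, rather than running a generic buffer from the start---the buffer idea is reserved for Section~\ref{section_mac_with_buffer}; (ii) the paper carries the induction on the \emph{individual} leakages $I(\overline W^{(i)}_l;\overline Z_1^k\mid \overline X_k^{(\bar i)})$ and then passes to the collective bound via~(\ref{leakage_2}), whereas you work with the joint leakage directly.

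Your anticipated ``main obstacle'' is not actually the crux. You frame the pad argument as: the key is only \emph{approximately} uniform given $\overline Z_1^{k-1}$, so the ciphertext is only approximately independent of the plaintext, and these approximations might pile up. The paper's mechanism is different and avoids any statistical-distance bookkeeping. Because $K_k^{(i)}$ is a genuinely uniform message independent of the fresh plaintext $\overline W^{(i)}_{k,2}$, the ciphertext $\overline W^{(i)}_{k,2}\oplus K_k^{(i)}$ is \emph{exactly} independent of $\overline W^{(i)}_{k,2}$, of $K_k^{(i)}$, and of all of $\overline Z_1^{k-1}$ jointly; hence $\overline Z_{k,2}$ is exactly independent of every past message and of the current sub-slot-2 plaintext. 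The only nonzero contribution in the chain-rule expansion is the wiretap term $I(\overline W^{(i)}_{k,1};\overline Z_{k,1}\mid\overline X^{(\bar i)}_{k,1})\le n_1\epsilon$, and the residual cross term for $\overline W^{(i)}_{k,2}$ (Eve combining her past view of the key with her view of the ciphertext) is dominated by $I(K_k^{(i)};\overline Z_1^{k-1})$, i.e.\ the induction hypothesis itself. So the bound stays at order $n_1\epsilon$ rather than growing with $k$; no slippage-control lemma is needed, and the appendix lemmas you point to are for the $N_1>0$ buffer version, not for this theorem.
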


\begin{proof}
	 We fix distributions $p_{X_1},p_{X_2}$. Initially we take $n_1=n_2=n/2$. In slot 1, user $i$ selects message $\mathcal{W}_1^{(i)} \in \mathcal{W}^{(i)}$ to be transmitted confidentially in the first part of the slot, while the second part is not used. Both the users use the wiretap coding scheme of \cite{tekin2008gaussian}. Hence the rate pair $(R_1,R_2)$ satisfies (\ref{MAC_WT_Secrecy_region}) and $R_{L,i}^{(n)}\leq n_1\epsilon,i=1,2$.
In slot 2, the two users select two messages each, $(\overline{W}^{(1)}_{2,1},\overline{W}^{(1)}_{2,2})$ and $(\overline{W}^{(2)}_{2,1},\overline{W}^{(2)}_{2,2})$ to be transmitted. Both users use the wiretap coding scheme (as in \cite{tekin2008gaussian}) for the first part of the message, i.e., $(\overline{W}^{(1)}_{2,1},\overline{W}^{(2)}_{2,1})$, and for the second part user $i$ first takes $XOR$ of $\overline{W}^{(i)}_{2,2}$ with the previous message, i.e., $\overline{W}^{(i)}_{2,2}\oplus \overline{W}^{(i)}_1$. This $XOR$ed message is transmitted over the MAC-WT using a usual MAC coding scheme (\cite{ahlswede1973multi}, \cite{liao1972multiple}). Hence the secure rate achievable in both parts of slot 2 satisfies (\ref{MAC_WT_Secrecy_region}) for both the users. This is also the overall rate of slot 2.

In slot 3, in the first part the rate satisfies (\ref{MAC_WT_Secrecy_region}) via wiretap coding. But in the second part we $XOR$ with $\overline{W}_2^{(i)}$ and are able to send two messages and hence \textit{double} the rate of (\ref{MAC_WT_Secrecy_region}) (assuming $2(R_1,R_2)$ via (\ref{MAC_WT_Secrecy_region}) is within the range of (\ref{MAC_capacity_region})). We continue like this (Fig \ref{leakage_slotwise}).

Define
\begin{equation}
\lambda_1 \triangleq \left\lceil\frac{I(X_1;Y\lvert X_2)}{I(X_1;Y\lvert X_2)-I(X_1;Z)}\right\rceil,  \label{lambda_A_1}
\end{equation}
where $\lceil x\rceil$ is the smallest integer $\geq x$. In slot $\lambda_1+1$ the rate of user 1 in the second part of the slot satisfies,
\begin{align}
R_1 & \leq \min \left( \lambda_1 \left( I(X_1;Y\lvert X_2)-I(X_1;Z) \right), I(X_1;Y\lvert X_2)\right) \nonumber \\
&=I(X_1;Y\lvert X_2).
\end{align}
Similarly we define $\lambda_2$ as
\begin{equation}
\lambda_2 \triangleq \left \lceil\frac{I(X_2;Y|X_1)}{I(X_2;Y|X_1)-I(X_2;Z)} \right \rceil. \label{lambda_A_1}
\end{equation}
In slot $\lambda_2+1$, the rate $R_2$ satisfies
\begin{align}
R_2 \leq I(X_2;Y|X_1).
\end{align}
In slot $\lambda=\max\{\lambda_1,\lambda_2\}+1$, the sum-rate will satisfy
\begin{align}
R&_1+R_2 \leq  \min \left\lbrace \lambda \left[I(X_1,X_2;Y)-\sum_{i=1}^2 I(X_i;Z)\right], I(X_1,X_2;Y)\right\rbrace.
\end{align}
After some slot, say, $\lambda^*>\lambda$, the sum-rate will get saturated by sum-capacity term, i.e., $I(X_1,X_2;Y)$, and hence thereafter the rate pair $(R_1,R_2)\triangleq (R_1^*,R_2^*)$ in the second part of the slot will be at a boundary point of (\ref{MAC_capacity_region}) and the overall rate for the slot is the average in the first part and the second part of the slot.

In slot $k$, (where $k>\lambda^*$) to transmit a message pair $(\overline{W}^{(1)}_k, \overline{W}^{(2)}_k)$, where $\overline{W}^{(i)}_k=(\overline{W}^{(i)}_{k,1},\overline{W}^{(i)}_{k,2}),$ $ i=1,2$, we use wiretap coding for $(\overline{W}^{(1)}_{k,1},\overline{W}^{(2)}_{k,1})$ and for the second part, we $XOR$ it with the previous message i.e., $\overline{W}^{(i)}_{k,2}\oplus \overline{W}^{(i)}_{k-1,2}, i=1,2$, and transmit the overall codeword over the MAC-WT. (Fig. \ref{leakage_slotwise})

 To get the overall rate of a slot close to that in (\ref{MAC_capacity_region}), we make $n_2=ln_1$. By taking $l$ large enough, we can come arbitrarily close to the boundary of (\ref{MAC_capacity_region}).
 
  For this coding scheme, $P_e^n\rightarrow 0$.  A convex combination of the rates in (\ref{MAC_capacity_region}) can be obtained by time sharing.
 Now we show that our coding/decoding scheme also satisfies (\ref{leakage_slot_k_induction}).
\\


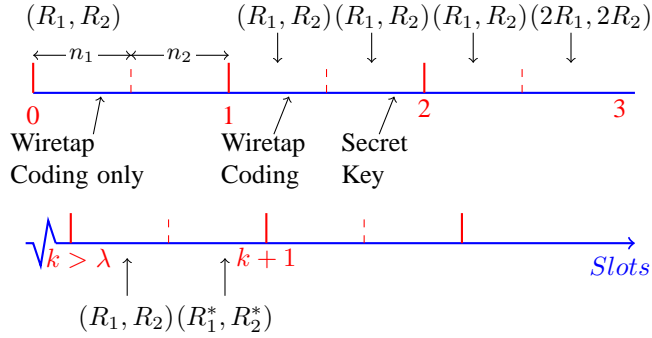
\begin{figure}
		\begin{tikzpicture}
	
	\tikzset{input/.style={}}
	\tikzset{block/.style={rectangle,draw}}
	\tikzstyle{pinstyle} = [pin edge={to-,thick,black}]
	
	\tikzstyle{ann} = [fill=white,font=\footnotesize,inner sep=1pt]
	\draw[blue,thick] (0,0) -- (8,0);
	\draw[blue,thick] (-0.1,-2.0)--(0,-2.0)--(0.1,-2.3)--(0.2,-1.7)--(0.3,-2.0);
	\draw[blue,thick, arrows=->](0.3,-2)--(8,-2);    
	\draw[red,thick] (0,0) -- (0,0.4);
	\draw[red,thin,dashed](1.3,0)--(1.3,0.4);
	\draw[red,thick] (2.6,0) -- (2.6,0.4);
	\draw[red,thin,dashed](3.9,0)--(3.9,0.4);
	\draw[red,thick] (5.2,0) -- (5.2,0.4);
	\draw[red,thin,dashed](6.5,0)--(6.5,0.4);
	\draw[red,thick] (0.5,-2.0) -- (0.5,-1.6);
	\draw[red,thin,dashed] (1.8,-2.0)--(1.8,-1.6);
	\draw[red,thick] (3.1,-2.0) -- (3.1,-1.6);         
	\draw[red,thin,dashed] (4.4,-2.0) -- (4.4,-1.6);         	
	\draw[red,thick](5.7,-2.0)--(5.7,-1.6);
	\node[text width=1.8cm, font=\small ] at (0.6,-0.9) {Wiretap Coding only};     
	\draw[arrows=->] (0.8,-0.55)--(0.9,-0.05);
	\draw[arrows=<->](0,0.5)--(1.3,0.5);   
	\node[ann] at (0.65,0.5) {$n_1$}; 
	\draw[arrows=<->](1.3,0.5)--(2.6,0.5);   
	\node[ann] at (1.95,0.5) {$n_2$}; 
	\node[text width=1.2cm, font=\small] at (3.1,-0.9) {Wiretap Coding}; 
	\node[text width=1.2cm, font=\small] at (4.7,-0.9) {Secret Key};   
	\draw[arrows=->] (3.2,-0.55)--(3.4,-0.05);
	\draw[arrows=->] (4.6,-0.55)--(4.8,-0.05);
	\node[red, font=\small] at (0,-0.3) {0};     \node[red,font=\small] at (2.6,-0.3) {1};  
	\node[red, font=\small] at (5.2,-0.2) {2};     \node[red,font=\small] at (7.8,-0.2) {3}; 
	\node[red, font=\small] at (0.6,-2.2) {$k>\lambda$};     \node[red,font=\small] at (3.1,-2.2) {$k+1$};    
	\node[blue,font=\small] at (7.8,-2.3) {$Slots$};   
	\node[text width=1.2cm, font=\small] at (0.5,1){$(R_1,R_2)$};        
	\node[text width=1.2cm, font=\small ] at (3.3,1) {$(R_1,R_2)$};      
	\node[text width=1.2cm, font=\small ] at (4.6,1) {$(R_1,R_2)$};        
	\node[text width=1.2cm, font=\small ] at (5.9,1) {$(R_1,R_2)$};    
	\node[text width=1.2cm, font=\small ] at (7.2,1) {$(2R_1,2R_2)$};  
	
	\node[text width=1.2cm, font=\small ] at (1.2,-3.0) {$(R_1,R_2)$};  
	\node[text width=1.2cm, font=\small ] at (2.5,-3.0) {$(R_1^{*},R_2^{*})$};          
	\draw[arrows=<-](3.25,0.4)--(3.25,0.8);   
	\draw[arrows=<-](5.85,0.4)--(5.85,0.8);   
	\draw[arrows=<-](4.5,0.4)--(4.5,0.8);       
	\draw[arrows=<-](7.15,0.4)--(7.15,0.8);
	\draw[arrows=->](1.25,-2.7)--(1.25,-2.2);          
	\draw[arrows=->](2.55,-2.7)--(2.55,-2.2);       
	\end{tikzpicture}
\caption{Coding Scheme to achieve Shannon Capacity region in MAC}
		\label{leakage_slotwise}
\end{figure}


\textit{Leakage Rate Analysis}: 
Before we proceed, we define the notation to be used here. For user $i$, the codeword sent in slot $k$ will be represented by $\overline{X}^{(i)}_k$. Correspondingly, $\overline{X}^{(i)}_{k,1}$ and $\overline{X}^{(i)}_{k,2}$ will represent $n_1$-length and $n_2$-length codewords of user $i$ in slot $k$. When we consider $i$ to be 1 or 2, $\bar{i}$ will be taken as 2 or 1 respectively. In slot $k$, the noisy codeword received by Eve is $\overline{Z}_{k}\equiv (\overline{Z}_{k,1},\overline{Z}_{k,2})$, where $\overline{Z}_{k,1}$ is the sequence corresponding to the wiretap coding part and $\overline{Z}_{k,2}$ is corresponding to the $XOR$ part (in which the previous message is used as a key). 

In slot 1, since wiretap coding of \cite{tekin2008gaussian} is employed, the leakage rate satisfies,
\begin{equation}
I(\overline{W}^{(1)}_1;\overline{Z}_1\lvert \overline{X}_1^{(2)})\leq n_1\epsilon,~I(\overline{W}^{(2)}_1;\overline{Z}_1\lvert \overline{X}_1^{(1)})\leq n_1\epsilon.
\end{equation}

For slot 2 we show, for user 1,
\begin{align}
I(\overline{W}^{(1)}_1;\overline{Z}_1,\overline{Z}_2\lvert \overline{X}^{(2)}_2) &\leq n_1\epsilon, \nonumber \\
I(\overline{W}^{(1)}_2;\overline{Z}_1,\overline{Z}_2\lvert \overline{X}^{(1)}_2) &\leq n_1\epsilon. 
\end{align}
Similarly one can show for user 2.
\par
We first note that
\begin{align}
I(&\overline{W}^{(1)}_1;\overline{Z}_1,\overline{Z}_2\lvert \overline{X}^{(2)}_2) \nonumber \\
&=I(\overline{W}^{(1)}_1;\overline{Z}_1)+I(\overline{W}^{(1)}_1;\overline{Z}_2\lvert \overline{Z}_1,\overline{X}^{(2)}_2) \nonumber \\
&\overset{(a)}{\leq} n_1\epsilon+H(\overline{W}^{(1)}_1\lvert \overline{Z}_1,\overline{X}^{(2)}_2)-H(\overline{W}^{(1)}_1\lvert \overline{Z}_1,\overline{X}^{(2)}_2,\overline{Z}_2) \nonumber \\
&\overset{(b)}{=} n_1\epsilon+H(\overline{W}^{(1)}_1\lvert \overline{Z}_1)-H(\overline{W}_1^{(1)}\lvert \overline{Z}_1)=n_1\epsilon.   \label{slot_2_1}
\end{align}
where $(a)$ follows from wiretap coding and $(b)$ follows by the fact that $\overline{X}_2^{(2)} \perp (\overline{W}_1^{(1)},\overline{Z}_1)$, and $(\overline{X}^{(2)}_2,\overline{Z}_2) \perp (\overline{W}_1^{(1)},\overline{Z}_1)$.

Next consider 
\begin{align}
I(&\overline{W}_2^{(1)};\overline{Z}_1,\overline{Z}_2\lvert \overline{X}_2^{(2)})  \nonumber \\
&=I(\overline{W}_{2,1}^{(1)},\overline{W}_{2,2}^{(1)};\overline{Z}_1,\overline{Z}_2\lvert \overline{X}_2^{(2)})  \nonumber \\
&=I(\overline{W}_{2,1}^{(1)};\overline{Z}_1,\overline{Z}_2\lvert \overline{X}_2^{(2)})+I(\overline{W}_{2,2}^{(1)};\overline{Z}_1,\overline{Z}_2\lvert \overline{X}_2^{(2)},\overline{W}_{2,1}^{(1)})  \nonumber \\
&\triangleq I_1+I_2.  \label{leakage_slot_2}
\end{align}
We get upper bounds on $I_1$ and $I_2$. The first term,
\begin{align}
I_1&=I(\overline{W}_{2,1}^{(1)};\overline{Z}_1,\overline{Z}_2\lvert \overline{X}_2^{(2)}) \nonumber \\
&=I(\overline{W}_{2,1}^{(1)};\overline{Z}_1,\overline{Z}_{2,1},\overline{Z}_{2,2}\lvert \overline{X}_2^{(2)}) \nonumber \\
&=I(\overline{W}_{2,1}^{(1)};\overline{Z}_{1}\lvert \overline{X}_2^{(2)})+I(\overline{W}_{2,1}^{(1)};\overline{Z}_{2,1} \lvert \overline{X}_2^{(2)},\overline{Z}_1) \nonumber\\
&+I(\overline{W}_{2,1}^{(1)};\overline{Z}_{2,2} \lvert \overline{X}_2^{(2)},\overline{Z}_1,\overline{Z}_{2,1})  \nonumber \\
&\overset{(a)}{=}0+I(\overline{W}_{2,1}^{(1)};\overline{Z}_{2,1} \lvert \overline{X}_{2,1}^{(2)},\overline{X}_{2,2}^{(2)}, \overline{Z}_1) +I(\overline{W}_{2,1}^{(1)};\overline{Z}_{2,2} \lvert \overline{X}_2^{(2)},\overline{Z}_1,\overline{Z}_{2,1}) \nonumber \\
&\triangleq I_{11}+I_{12},  \label{leakage_slot2_P1}
\end{align}
where $(a)$ follows because $\overline{Z}_1 \perp (\overline{W}_{21}^{(1)},\overline{X}_2^{(2)})$. Furthermore,
\begin{align}
I_{11}&=I(\overline{W}_{2,1}^{(1)};\overline{Z}_{2,1} \lvert \overline{X}_{2,1}^{(2)},\overline{X}_{2,2}^{(2)}, \overline{Z}_1) \nonumber \\
&=H(\overline{W}_{2,1}^{(1)};\lvert \overline{X}_{2,1}^{(2)},\overline{X}_{2,2}^{(2)}, \overline{Z}_1) \nonumber -H(\overline{W}_{2,1}^{(1)};\lvert \overline{Z}_{2,1},\overline{X}_{2,1}^{(2)},\overline{X}_{2,2}^{(2)}, \overline{Z}_1) \nonumber \\
&\overset{(a)}{=}H(\overline{W}_{2,1}^{(1)};\lvert \overline{X}_{2,1}^{(2)})-H(\overline{W}_{2,1}^{(1)};\lvert \overline{Z}_{2,1},\overline{X}_{2,1}^{(2)}) \nonumber \\
&=I(\overline{W}_{2,1}^{(1)};\overline{Z}_{2,1},\lvert \overline{X}_{2,1}^{(2)}) \overset{(b)}{\leq}n_1\epsilon,  \label{leakage_slot_P2}
\end{align}
where $(a)$ follows since $(\overline{X}_{2,2}^{(2)}, \overline{Z}_1) \perp (\overline{W}_{2,1}^{(1)}, \overline{Z}_{2,1},\overline{X}_{2,1}^{(2)})$ and $(b)$ follows since the first part of the message is encoded via the usual coding scheme for MAC-WT. 
\\
$~~$Also,
\begin{align}
I_{12}&=I(\overline{W}_{2,1}^{(1)};\overline{Z}_{2,2} \lvert \overline{X}_2^{(2)},\overline{Z}_1,\overline{Z}_{2,1}) \nonumber \\
&=H(\overline{W}_{2,1}^{(1)};\lvert \overline{X}_{2,1}^{(2)},\overline{X}_{2,2}^{(2)},\overline{Z}_1,\overline{Z}_{2,1}) \nonumber \\ &-H(\overline{W}_{2,1}^{(1)}\lvert \overline{X}_{2,1}^{(2)},\overline{X}_{2,2}^{(2)},\overline{Z}_1,\overline{Z}_{2,1},\overline{Z}_{2,2} ) \nonumber \\
&\overset{(a)}{=}H(\overline{W}_{2,1}^{(1)};\lvert \overline{X}_{2,1}^{(2)}, \overline{Z}_{2,1})-H(\overline{W}_{2,1}^{(1)};\lvert \overline{X}_{2,1}^{(2)},\overline{Z}_{2,1}) =0,  \nonumber
\end{align}
where $(a)$ follows since $(\overline{X}_{2,2}^{(2)},\overline{Z}_1,\overline{Z}_{2,2}) \perp (\overline{W}_{2,1}^{(1)},\overline{X}_{2,1}^{(2)},\overline{Z}_{2,1})$.

From (\ref{leakage_slot_2}), (\ref{leakage_slot2_P1}) and (\ref{leakage_slot_P2}) we have $I_1=I_{11}+I_{12}\leq n_1\epsilon$. 

Next consider,
\begin{align}
I_2&=I(\overline{W}_{2,2}^{(1)};\overline{Z}_1,\overline{Z}_2\lvert \overline{X}_2^{(2)},\overline{W}_{2,1}^{(1)})  \nonumber \\
&=I(\overline{W}_{2,2}^{(1)};\overline{Z}_2\lvert \overline{X}_2^{(2)},\overline{W}_{2,1}^{(1)})  +I(\overline{W}_{2,2}^{(1)};\overline{Z}_1\lvert \overline{X}_2^{(2)},\overline{W}_{2,1}^{(1)},\overline{Z}_2).  \label{leakage_slot2_I_2}
\end{align}
We have,
\begin{align}
I&(\overline{W}_{2,2}^{(1)};\overline{Z}_2\lvert \overline{X}_2^{(2)},\overline{W}_{2,1}^{(1)}) \nonumber \\
&=I(\overline{W}_{2,2}^{(1)};\overline{Z}_{2,1}\lvert \overline{X}_2^{(2)},\overline{W}_{2,1}^{(1)})  +I(\overline{W}_{2,2}^{(1)};\overline{Z}_{2,2}\lvert \overline{X}_2^{(2)},\overline{W}_{2,1}^{(1)},\overline{Z}_{2,1})  \nonumber \\
&\overset{(a_1)}{=}0+I(\overline{W}_{2,2}^{(1)};\overline{Z}_{2,2}\lvert \overline{X}_2^{(2)},\overline{W}_{2,1}^{(1)},\overline{Z}_{2,1}) \nonumber \\
&\overset{(a_2)}{=}I(\overline{W}_{2,2}^{(1)};\overline{Z}_{2,2}\lvert \overline{X}_{2,2}^{(2)}) \overset{(a_3)}{=}0,  \nonumber
\end{align}
 and $(a_1)$ follows since $\overline{W}_{2,2}^{(1)} \perp (\overline{Z}_{2,1},\overline{X}_2^{(2)},\overline{W}_{2,1}^{(1)})$; $(a_2)$ holds because $(\overline{X}_{2,1}^{(2)},\overline{W}_{2,1}^{(1)})\perp (\overline{W}_{2,2}^{(1)},\overline{Z}_{2,2},\overline{X}_{2,2}^{(2)})$; and $(a_3)$ is true since $\overline{W}_{2,2}^{(1)} \perp (\overline{X}_{2,2}^{(2)},\overline{Z}_{2,2})$.

 Also,
 \begin{align}
I&(\overline{W}_{2,2}^{(1)};\overline{Z}_1\lvert \overline{X}_2^{(2)},\overline{W}_{2,1}^{(1)},\overline{Z}_2)  \nonumber \\
&=I(\overline{W}_{2,2}^{(1)};\overline{Z}_1\lvert \overline{X}_{2,1}^{(2)},\overline{X}_{2,2}^{(2)},\overline{W}_{2,1}^{(1)},\overline{Z}_{2,1},\overline{Z}_{2,2}) \nonumber \\
&\overset{(b_1)}{=}I(\overline{W}_{2,2}^{(1)};\overline{Z}_1\lvert \overline{X}_{2,2}^{(2)},\overline{Z}_{2,2}) \overset{(b_2)}{=}0,  \nonumber
 \end{align}
 where $(b_1)$ follows since $(\overline{W}_{2,1}^{(1)},\overline{Z}_{2,1},\overline{X}_{2,1}^{(2)}) \perp (\overline{Z}_{2,2},\overline{X}_{2,2}^{(2)},\overline{W}_{2,2}^{(1)},\overline{Z}_1)$), and $(b_2)$ follows because $\overline{Z}_1\perp (\overline{W}_{2,2}^{(1)},\overline{X}_{2,2}^{(2)},\overline{Z}_{2,2})$.
 Hence from (\ref{leakage_slot2_I_2}) we have $I_2=0$.
 
 From (\ref{leakage_slot_2}) we have
\begin{equation}
I(\overline{W}_2^{(1)};\overline{Z}_1,\overline{Z}_2\lvert \overline{X}_2^{(2)}) \leq n_1\epsilon.
\end{equation}
Similarly one can show that 
\begin{equation}
I(\overline{W}_2^{(2)};\overline{Z}_1,\overline{Z}_2\lvert \overline{X}_2^{(1)}) \leq n_1\epsilon.
\end{equation}
Therefore, from (\ref{leakage_2}),
\begin{align}
&I(\overline{W}^{(1)}_2,\overline{W}^{(2)}_2;\overline{Z}_1,\overline{Z}_2)\nonumber \\
&\leq  I(\overline{W}^{(1)}_2;\overline{Z}_1,\overline{Z}_{2}\lvert \overline{X}^{(2)}_2)+I(\overline{W}^{(2)}_2;\overline{Z}_1,\overline{Z}_{2}\lvert \overline{X}^{(1)}_2). \nonumber
\end{align}

To prove that (\ref{leakage_slot_k_induction}) holds for any slot, we use mathematical induction in the following lemma. For a proof, please see \cite{shah2014achieving_MAC}.

 \begin{lem}
 	Let (\ref{leakage_slot_k_induction}) hold for $k$, then it also holds for $k+1$.
 \end{lem}

 \end{proof}

\begin{rem}[\textbf{A note about strong secrecy}]
	The notion of secrecy used above is \textit{weak secrecy}, i.e., if message $W$ is transmitted and Eve receives $Z^n$, then $I(W;Z^n)\leq n_1\epsilon $. \textit{Strong Secrecy} requires that $I(W;Z^n)\leq \epsilon$.
In single user case, if strong secrecy notion is used instead of weak secrecy, the secrecy capacity does not change (\cite{maurer2000information}). The same result has been proved for a multiple access channel with a wiretapper in \cite{yassaee2010multiple} using the channel resolvability technique. In our coding scheme of Theorem 2 if we use resolvability based coding in slot 1, and in subsequent slots use both resolvability based coding (in the first part of the slot) and the previous message (which is now strongly secure w.r.t. Eve) as a key in the second part of the slot, we can achieve the same secrecy-rate region (capacity region of usual MAC without Eve), satisfying the leakage rate
\begin{align}
\limsup_{n\rightarrow \infty}I(\overline{W}_k^{(1)},\overline{W}_k^{(2)};\overline{Z}_1,\overline{Z}_2,\ldots,\overline{Z}_{k}) = 0,
\end{align}
as $n\rightarrow \infty$, because in the RHS of (\ref{leakage_slot_k_induction}), we can get $\epsilon$ instead of $2n_1\epsilon$.
\end{rem}

\section{Discrete Memoryless MAC-WT with Buffer}
\label{section_mac_with_buffer}
\thinmuskip=0mu
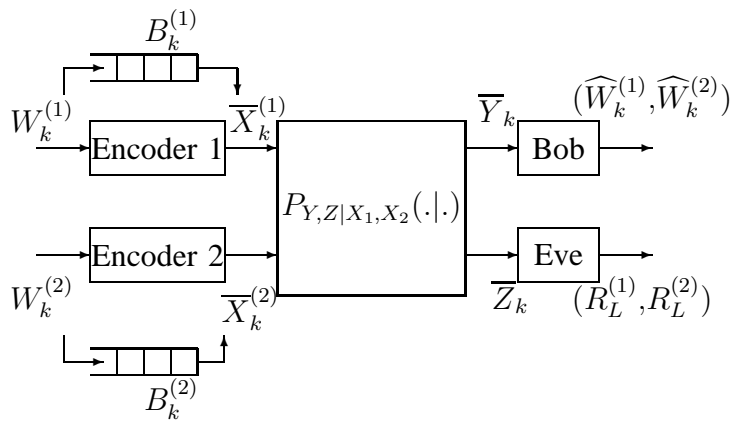
\begin{figure}[h]
	\setlength{\unitlength}{0.14in} 
	\centering 
	\begin{picture}(32,15) 
	\put(2,8.5){\framebox(5,2){Encoder 1}}
	\put(2,4.5){\framebox(5,2){Encoder 2}}
	\put(9,4){\framebox(7,6.5){$P_{Y,Z\lvert X_1,X_2}(.\lvert .)$}}
	\put(18,8.5){\framebox(3,2){Bob}}
	\put(18,4.5){\framebox(3,2){Eve}}
	\put(0,9.5){\vector(1,0){2}}\put(7,9.5){\vector(1,0){2}}
	\put(7,5.5){\vector(1,0){2}}
	\put(0,5.5){\vector(1,0){2}}
	\put(16,9.5){\vector(1,0){2}}\put(16,5.5){\vector(1,0){2}}
	\put(21,9.5){\vector(1,0){2}}
	\put(21,5.5){\vector(1,0){2}}
	\put(-1,10.1) {$W^{(1)}_k$}
	\put(-1,3.5) {$W^{(2)}_k$}
	\put(20,11) {$(\widehat{W}^{(1)}_k,\widehat{W}^{(2)}_k)$}
	\put(20,3.5) {$(R_L^{(1)},R_L^{(2)})$}
	\put(7.2,10.1)
	{$\overline{X}^{(1)}_k$} \put(6.9,3.0){$\overline{X}^{(2)}_k$}\put(16.5,10.5) {$\overline{Y}_k$}
	\put(17,3.5){$\overline{Z}_k$}
	\put(2,1){\line(1,0){4}}
	\put(2,2){\line(1,0){4}}
	\put(3,1){\line(0,1){1}}
	\put(4,1){\line(0,1){1}}
	\put(5,1){\line(0,1){1}}
	\put(6,1){\line(0,1){1}}	
	\put(1,1.5){\vector(1,0){1.5}}		
	\put(2,12){\line(1,0){4}}
	\put(2,13){\line(1,0){4}}
	\put(3,12){\line(0,1){1}}
	\put(4,12){\line(0,1){1}}
	\put(5,12){\line(0,1){1}}
	\put(6,12){\line(0,1){1}}	
	\put(1,12.5){\vector(1,0){1.5}}
	
		\put(1,11.5){\line(0,1){1}}
		\put(6,12.5){\line(1,0){1.5}}
		\put(7.5,12.5){\vector(0,-1){1}}
		\put(4,13.5){$B_k^{(1)}$}
		\put(4,-0.3){$B_k^{(2)}$}
	\put(1,1.5){\line(0,1){1}}
	\put(6,1.5){\line(1,0){1}}
	\put(7,1.5){\vector(0,1){1}}
	\end{picture}
	\caption{Discrete Memoryless Multiple Access Wiretap Channel with secret key buffers} 
	\label{fig:DM_MAC_BUFFER} 
\end{figure}

In this section we improve the result in Theorem 3.1 by obtaining rates (\ref{MAC_capacity_region}) while enhancing the secrecy requirement from  (\ref{leakage_slot_k_induction}) to 
\begin{align}
I(\overline{W}^{(1)}_{k}, \overline{W}^{(1)}_{k-1}, \ldots, \overline{W}^{(1)}_{k-N_1};\overline{Z}_1,\ldots,\overline{Z}_k \lvert \overline{X}_k^{(2)}) &\leq n_1\epsilon, \nonumber \\
I(\overline{W}^{(2)}_{k}, \overline{W}^{(2)}_{k-1}, \ldots, \overline{W}^{(1)}_{k-N_1};\overline{Z}_1,\ldots,\overline{Z}_k \lvert \overline{X}_k^{(1)}) &\leq n_1\epsilon, \nonumber \\
I(\overline{W}^{(1)}_{k}, \overline{W}^{(2)}_{k}, \ldots, \overline{W}^{(1)}_{k-N_1},\overline{W}^{(2)}_{k-N_1};\overline{Z}_1,\ldots,\overline{Z}_k) &\leq 2n_1\epsilon   , \label{leakage_with_buffer}
\end{align} 
where $N_1$ can be arbitrarily large. This will satisfy the requirements of any practical system \footnote[1]{In many countries, confidential messages beyond a certain period are declassified by law.}. For this, we use a key buffer at each of the users and instead of using the messages transmitted in slot $(k-1)$ as the key in slot $k$, we use the messages transmitted in slots before $k-N_1-1$.

Let each user has an infinite key buffer to store the key bits. The message $\overline{W}_k^{(i)}$ after transmission in slot $k$ from user $i$ is stored in its key buffer at the end of the slot. However now in slot $k+1$ we use the \textit{oldest} bits stored in its key buffer as a key in the second part of its slot. Once certain bits from the key buffer have been used as a key, these are discarded from the key buffer. Let $B_k^{(i)}$ be the number of key bits in the key buffer of the $i^{th}$ user at the beginning of the $k^{th}$ slot. Then out of this, for $k\geq \lambda^{*}$, the number of key bits used in a slot by user 1 is $C_1 n_2$ (since these are used only in the second part of the slot) where $C_1\leq I(X_1;Y\lvert X_2)$, while the total number of secret bits transmitted in the slot is $C_1n_2+R_s^{(1)}n_1$. These transmitted bits also get stored in its key buffer at time $k+1$. Similarly it holds for user 2. Thus $B_k^{(i)} \rightarrow \infty$ as $k\rightarrow \infty$ for $i=1,2$.

After some time (say $N_2$ slots) since we are using the oldest bits in the key buffer, for $k\geq N_2$, we will be using the secret key bits only from messages $(\overline{W}_1^{(i)},\overline{W}_2^{(i)},\ldots,\overline{W}_{k-N_1-1}^{(i)})$ for securing messages $(\overline{W}_k^{(i)},\overline{W}_{k-1}^{(i)},\ldots,\overline{W}_{k-N_1}^{(i)})$, for user $i=1,2$ respectively. The following proof works for $N_1>0$. Theorem 2.1 wa for $N_1=0$.

\begin{thm}
	 The secrecy-rate region (with the leakage rate constraints (\ref{leakage_with_buffer})) of a DM-MAC-WT equals the usual Shannon capacity region (\ref{MAC_capacity_region}) of the MAC.
\end{thm}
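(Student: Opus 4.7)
My plan is to extend the coding scheme of Theorem 2.1 so that each user keeps an infinite key buffer, and then adapt the inductive leakage analysis from that proof to handle the longer secrecy horizon $N_1$. In each slot $k$, user $i$ still transmits $\overline{W}_k^{(i)}=(\overline{W}_{k,1}^{(i)},\overline{W}_{k,2}^{(i)})$, with wiretap (or resolvability-based) coding in the first $n_1$ channel uses and a MAC codeword carrying $\overline{W}_{k,2}^{(i)}\oplus K_k^{(i)}$ in the last $n_2$ channel uses. The only change is how $K_k^{(i)}$ is drawn: instead of reusing $\overline{W}_{k-1,2}^{(i)}$, user $i$ pops the oldest unused bits from its buffer $B_k^{(i)}$, so that $K_k^{(i)}$ is a function of $\{\overline{W}_j^{(i)}:j\le k-N_1-1\}$ only. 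As in Theorem 2.1, the rate escalation (via $\lambda_1,\lambda_2,\lambda^{*}$) and the $n_2/n_1\to\infty$ choice still yield any rate pair in (\ref{MAC_capacity_region}); the first step is to check that this is consistent with the buffer, i.e., that the net fill rate is positive so that $B_k^{(i)}\to\infty$ and enough fresh key is available in every slot for $k$ sufficiently large (which follows because at rates approaching $I(X_i;Y|X_{\bar i})$ one adds $R_s^{(i)}n_1+C_in_2$ bits and removes only $C_in_2$ per slot).

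The heart of the argument is the leakage analysis (\ref{leakage_with_buffer}). I would prove it by strong induction on $k$, with induction hypothesis that (\ref{leakage_with_buffer}) holds for every slot up to $k$ and, crucially, that for every slot $j\le k-N_1-1$ the tuple of messages used as key bits for slot $k+1$ is almost independent of $(\overline{Z}_1,\ldots,\overline{Z}_k)$, conditioned on the relevant codeword of the other user. The base cases $k=1,\ldots,N_1+1$ reduce to Theorem 2.1 because no buffer key has yet been consumed. For the induction step, I would split
\begin{equation*}
I(\overline{W}^{(1)}_{k+1},\ldots,\overline{W}^{(1)}_{k+1-N_1};\overline{Z}_1,\ldots,\overline{Z}_{k+1}\lvert \overline{X}^{(2)}_{k+1})
\end{equation*}
into a past-slot term bounded by the induction hypothesis, and a slot-$(k+1)$ term that I further decompose (as in the $I_1,I_2$ split of Theorem 2.1) into (i) the wiretap-coded portion $\overline{W}^{(1)}_{k+1,1}$ and (ii) the XOR portion $\overline{W}^{(1)}_{k+1,2}$. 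Portion (i) is handled exactly as in (\ref{leakage_slot2_P1})--(\ref{leakage_slot_P2}). For portion (ii), I add and subtract the conditioning on the key $K_{k+1}^{(1)}$: given $K_{k+1}^{(1)}$, the XOR output is a deterministic invertible function of $\overline{W}^{(1)}_{k+1,2}$, so the second-part channel looks like a plain MAC; and without conditioning on $K_{k+1}^{(1)}$, the one-time-pad argument forces $I(\overline{W}^{(1)}_{k+1,2};\overline{Z}_{k+1,2}\lvert \text{past})$ to be controlled by $I(K_{k+1}^{(1)};\overline{Z}_1,\ldots,\overline{Z}_k)$, which is $\le n_1\epsilon$ by the induction hypothesis applied to the earlier slots $\le k-N_1-1$ from which $K_{k+1}^{(1)}$ is drawn. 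The sum-leakage bound then follows from (\ref{leakage_2}).

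The main obstacle is ensuring the one-time-pad reduction is clean when $K_{k+1}^{(i)}$ is an arbitrary concatenation of bits harvested from many previous slots rather than a single prior message. To deal with this I would argue, using the Markov structure induced by the slotted i.i.d. channel and the independence of the message sources across slots, that $K_{k+1}^{(i)}$ is a function of $\{\overline{W}_j^{(i)}:j\le k-N_1-1\}$ that is independent of $(\overline{W}_{k+1-N_1}^{(i)},\ldots,\overline{W}_{k+1}^{(i)})$, so the usual Shannon cipher inequality $I(M;M\oplus K,\mathrm{view})\le I(K;\mathrm{view})$ applies slotwise, and the $\mathrm{view}$-side mutual information is bounded by the inductive leakage estimate for the key-source slots. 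Finally, I would remark that replacing wiretap coding by resolvability-based coding in every slot (as in the remark after Theorem 2.1) upgrades $n_1\epsilon$ to $\epsilon$ and yields strong secrecy for all recent messages, completing the proof.
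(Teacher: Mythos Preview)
Your coding scheme, buffer accounting, and rate argument match the paper. The divergence is in the leakage analysis, and there is a real gap in your inductive plan.

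The paper does \emph{not} prove (\ref{leakage_with_buffer}) by induction on $k$. Instead, for a fixed $k$ it splits the vector $(\overline{W}^{(1)}_{k},\ldots,\overline{W}^{(1)}_{k-N_1})$ once, by first/second mini-slot rather than by time:
\[
I(\overline{W}^{(1)}_{k,1},\ldots,\overline{W}^{(1)}_{k-N_1,1};\overline{Z}_1,\ldots,\overline{Z}_k\lvert \overline{X}^{(2)}_k)
\;+\;
I(\overline{W}^{(1)}_{k,2},\ldots,\overline{W}^{(1)}_{k-N_1,2};\overline{Z}_1,\ldots,\overline{Z}_k\lvert \overline{X}^{(2)}_k,\overline{W}^{(1)}_{k,1},\ldots,\overline{W}^{(1)}_{k-N_1,1}),
\]
and bounds the two terms by separate direct lemmas. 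The first term is handled by chaining the $N_1{+}1$ wiretap leakages (with $\epsilon_1=\epsilon/N_1$). The second term is handled by introducing the index sets $A_1,A_2\subseteq\{1,\ldots,k-N_1-1\}$ of the messages actually used as keys for the window, and reducing the whole quantity to mutual informations involving only $\overline{W}^{(i)}_{A_i,1}$ and $\overline{Z}_{A_i}$, i.e., wiretap-coded pieces of earlier slots; the XOR-coded pieces drop out by one-time-pad independence. The resulting bound is $O(n_1\epsilon)$ with constants depending only on $N_1$, not on $k$.

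Your recursion, by contrast, writes $c_{k+1}\le (\text{past term})+(\text{slot-}(k{+}1)\text{ term})$. The ``past term'' is \emph{not} the induction hypothesis: it is conditioned on $\overline{X}^{(2)}_{k+1}$, contains $\overline{Z}_{k+1}$, and covers the window $\{k{+}1{-}N_1,\ldots,k\}$ rather than $\{k{-}N_1,\ldots,k\}$. Even after the obvious reductions (removing $\overline{W}^{(1)}_{k+1}$, dropping $\overline{X}^{(2)}_{k+1}$ by independence) you obtain at best $c_{k+1}\le c_k + O(n_1\epsilon)$, so the bound grows linearly in $k$. The auxiliary hypothesis you add---that the key bits are ``almost independent'' of $(\overline{Z}_1,\ldots,\overline{Z}_k)$---is exactly what has to be proved, and it does not follow from (\ref{leakage_with_buffer}) at earlier times because $K^{(1)}_{k+1}$ is assembled from messages in slots $\le k-N_1-1$, which lie \emph{outside} every earlier window; your Shannon-cipher inequality then has nothing to feed on.

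In short, the missing idea is the $A_1,A_2$ bookkeeping: one must trace which earlier slots supply the key for the \emph{entire} window and bound the leakage of those key-source messages directly from the per-slot wiretap guarantee, not from the joint hypothesis (\ref{leakage_with_buffer}). Once you do that, the argument becomes non-recursive and the bound is uniform in $k$.
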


\begin{proof}
	 With the proposed modification of this section to the coding-decoding scheme of Section 3, in any slot $k$, the legitimate receiver is able to decode the message pair $(\overline{W}_k^{(1)},\overline{W}_k^{(2)})$ with probability of error $P_e^{(n)}\rightarrow 0$ as $n\rightarrow \infty$. Also (\ref{leakage_slot_k_induction}) along with $R_{L,i}^{(n)}\leq n_1\epsilon_1, i=1,2$ continue to be satisfied, where $\epsilon_1>0$ will be fixed later on.

Now we consider the leakage rate. We have,
\begin{align}
I&(\overline{W}^{(1)}_{k}, \overline{W}^{(1)}_{k-1}, \ldots, \overline{W}^{(1)}_{k-N_1};\overline{Z}_1,\ldots,\overline{Z}_k \lvert \overline{X}_k^{(2)}) \nonumber \\
&=I(\overline{W}^{(1)}_{k,1}, \overline{W}^{(1)}_{k-1,1}, \ldots, \overline{W}^{(1)}_{k-N_1,1};\overline{Z}_1,\ldots,\overline{Z}_k \lvert \overline{X}_k^{(2)}) \nonumber \\
&+I(\overline{W}^{(1)}_{k,2}, \overline{W}^{(1)}_{k-1,2}, \ldots, \overline{W}^{(1)}_{k-N_1,2};\overline{Z}_1,\ldots,\overline{Z}_k \lvert \overline{X}_k^{(2)},\overline{W}^{(1)}_{k,1}, \ldots, \overline{W}^{(1)}_{k-N_1,1}). \nonumber \\
\end{align}
From Lemma \ref{lemma_a1} and Lemma \ref{lemma_a2} in the Appendix,

\begin{align}
I(\overline{W}^{(1)}_{k,1}, \overline{W}^{(1)}_{k-1,1}, \ldots, \overline{W}^{(1)}_{k-N_1,1};\overline{Z}_1,\ldots,\overline{Z}_k \lvert \overline{X}_k^{(2)}) \leq n_1\epsilon \label{lemma_1}
\end{align}
and
\begin{align}
I&(\overline{W}^{(1)}_{k,2}, \overline{W}^{(1)}_{k-1,2}, \ldots, \overline{W}^{(1)}_{k-N_1,2};\overline{Z}_1,\ldots,\overline{Z}_k \lvert \overline{X}_k^{(2)},  \overline{W}^{(1)}_{k,1}, \ldots, \overline{W}^{(1)}_{k-N_1,1})\leq 6n_1\epsilon \label{lemma_2}
\end{align}
Thus, taking $\epsilon = \epsilon/7,$ we obtain the first inequality in (\ref{leakage_with_buffer}). Similarly we can show the second inequality.

To prove the third inequality we define $\widetilde{W}^{(1)}\triangleq (\overline{W}^{(1)}_{k}, \overline{W}^{(1)}_{k-1}, \ldots, \overline{W}^{(1)}_{k-N_1})$, $\widetilde{W}^{(2)}\triangleq(\overline{W}^{(2)}_{k}, \overline{W}^{(2)}_{k-1}, \ldots, \overline{W}^{(2)}_{k-N_1})$ and $\widetilde{Z}\triangleq (\overline{Z}_1,\ldots,\overline{Z}_k)$, and we have
\begin{align}
I(&\widetilde{W}^{(1)},\widetilde{W}^{(2)};\widetilde{Z})  \nonumber \\
&=I(\widetilde{W}^{(1)};\widetilde{Z})+I(\widetilde{W}^{(2)};\widetilde{Z}\lvert \widetilde{W}^{(1)}) \nonumber \\
&=H(\widetilde{W}^{(1)})-H(\widetilde{W}^{(1)}\lvert \widetilde{Z})+H(\widetilde{W}^{(2)})  -H(\widetilde{W}^{(2)}\lvert \widetilde{Z},\widetilde{W}^{(1)}) \nonumber \\
&\overset{(a)}{\leq}H(\widetilde{W}^{(1)}\lvert \overline{X}_k^{(2)})-H(\widetilde{W}^{(1)}\lvert \widetilde{Z} ,\overline{X}_k^{(2)})+H(\widetilde{W}^{(2)}\lvert \overline{X}_k^{(1)})  -H(\widetilde{W}^{(2)}\lvert \widetilde{Z},\overline{X}_k^{(1)}) \nonumber \\
&=I(\widetilde{W}^{(1)};\widetilde{Z}\lvert \overline{X}_k^{(2)})+I(\widetilde{W}^{(2)};\widetilde{Z}\lvert \overline{X}_k^{(1)}),  \label{leakage_2_buffer}
\end{align}
where $(a)$ follows from the facts: conditioning decreases entropy, messages are independent and a codeword is a function of the message to be transmitted.
Hence, from (\ref{lemma_1}) and (\ref{lemma_2})
\begin{align}
I(&\overline{W}^{(1)}_{k}, \overline{W}^{(2)}_{k}, \ldots, \overline{W}^{(1)}_{k-N_1},\overline{W}^{(2)}_{k-N_1};\overline{Z}_1,\ldots,Z_k^n) \leq n_1\epsilon. 
\end{align}
\end{proof}

\section{Fading MAC-WT}
\label{section_fading_mac}

In this section we consider a two user discrete time additive white Gaussian fading channel. If $X_1,~X_2$ are the channel inputs, then Bob receives
\begin{equation}
Y=\widetilde{H}_1X_2+\widetilde{H}_2X_2+N_1
\end{equation}
and Eve receives
\begin{equation}
Z=\widetilde{G}_1X_1+\widetilde{G}_2X_2+N_2,
\end{equation}
where $\widetilde{H}_i$ is the channel gain to Bob, $\widetilde{G}_i$ is the channel gain to Eve and $N_{i}$ has Gaussian distribution with mean 0 and variance $\sigma_i^2$,  $i=1,2$. We assume that the random variables $\widetilde{H}_1,~ \widetilde{H}_2,~ \widetilde{G}_1, \widetilde{G}_2, N_{1}, N_{2}$ are independent of each other. The channel is experiencing slow fading, i.e., the channel gains remain same during the transmission of the whole codeword. Let $H_i=\lvert\widetilde{H}_i\rvert^2$ and $G_i=\lvert\widetilde{G}_i\rvert^2$, $i=1,2.$ Average power constraint for user $i$ is $\overline{P}_i$.

We define some notation for convenience. For $H=(H_1,H_2),~G=(G_1,G_2)$,
\begin{align}
C_1(P_1(H,G)) & \triangleq \frac{1}{2}\log\left(1+\frac{H_1P_1(H,G)}{\sigma_1^2}\right) \nonumber \\
C_2(P_2(H,G)) & \triangleq \frac{1}{2}\log\left(1+\frac{H_2P_1(H,G)}{\sigma_1^2}\right) \nonumber \\
C_1^e(P_1(H,G)) & \triangleq \frac{1}{2}\log\left(1+\frac{G_1P_1(H,G)}{\sigma_2^2+G_2P_2(H,G)}\right)  \nonumber \\
C_2^e(P_2(H,G)) & \triangleq \frac{1}{2}\log\left(1+\frac{G_2P_2(H,G)}{\sigma_2^2+G_1P_1(H,G)}\right)  \nonumber \\
C(P_1(H,G),P_2(H,G))&\triangleq \frac{1}{2}\log\left(1+\frac{H_1P_1(H,G)+H_2P_2(H,G)}{\sigma_1^2}\right)
\end{align}

An achievable secrecy rate region for this channel is 
\begin{equation}
\mathcal{R}_g^s(\overline{P})=
\end{equation}

\begin{equation}
\left\lbrace
\begin{array}{c l}
(R_1,R_2): \qquad\qquad\qquad\qquad\\
R_1\leq \mathbb{E}_{H,G}\left[ \left(C_1(P_1)-C_1^e(P_1)\right)^+\right] \\
R_2\leq \mathbb{E}_{H,G}\left[ \left(C_2(P_2)-C_2^e(P_2)\right)^+\right] \\
R_1+R_2\leq \mathbb{E}_{H,G}\left[ \left(C(P_1,P_2)-\sum_{i=1}^{2}C_i^e(P_i)\right)^+\right] \\
\end{array}\right\rbrace   \label{GMAC_WT_region}
\end{equation}
where $\overline{P}=(\overline{P}_1,\overline{P}_2)$. To achieve these rates (with $P_i(H,G)\equiv \overline{P}_i$), the transmitters need not know the channel states, but Bob's receiver needs to know all $H_i$, $G_i$. We assume this in this section.

If the channel states $(H,G)$ are known at each of the users as well as at the receiver of Bob, then we can improve over the rate region in (\ref{GMAC_WT_region}) by making the transmit powers as functions of $(H,G)$:
\begin{equation}
\mathcal{P}:H\times G\rightarrow \mathbb{R}^2_{+},
\end{equation}
where $\mathcal{P}=(P_1,P_2)$. Now we denote the rate region as $\mathcal{C}_f^s(\mathcal{P})$. 
We note that the secrecy capacity region of MAC-WT ($\mathcal{C}_f^s(\mathcal{P})$) is not known, but  $\mathcal{R}^s_f(\mathcal{P})\subseteq \mathcal{C}^s_f(\mathcal{P})$ \cite{tekin2007secrecy}.

Now we use the coding scheme of Section 3 to the two user fading MAC-WT to enlarge the secrecy rate region to the usual capacity region of the fading channel.
Message pair $(\overline{W}_k^{(1)}, \overline{W}_k^{(2)})$ is to be transmitted confidentially by the two users over the fading MAC in slot $k$, and will be stored in their respective secret key buffers at the end of the $k^{th}$ slot. Let $B_k^{(1)}, B_k^{(2)}$ be the number of bits in the key buffers of users 1 and 2 respectively at the beginning of the slot $k$. Let $\overline{R}_k^{(i)}$ bits be taken from the key buffer of user $i$ to act as a secret key for transmission of message $\overline{W}_k^{(i)}$. The two users satisfy the long term average power constraint
\begin{equation}
\limsup_{k\rightarrow \infty}\frac{1}{k}\sum_{m=1}^{k}\mathsf{E}\left[P_i(H_k,G_k)\right] \leq \overline{P}_i,~~ i=1,2, \label{power_const}
\end{equation}
where $H_k,~G_k$ are the channel gains in slot $k$ and $P_i(H_k,G_k)$ is the average power used by user $i$ in slot $k$. We need to compute $P_i(H,G)$ and $\overline{R}_k^{(i)}, i=1,2$ such that the resulting average rate region $(\overline{r}^{(1)},\overline{r}^{(2)})$ is maximized, where
\begin{equation}
\overline{r}^{(i)}=\limsup_{k\rightarrow \infty}\frac{1}{k}\sum_{l=1}^{k}r_l^{(i)},
\end{equation}
$r_k^{(i)}$ is the transmission rate of user $i$ in slot $k$, subject to the long term respective power constraints (\ref{power_const}). The secrecy-rate region is computed when 
\begin{equation}
Pr\left(\{H_{1k}>G_{1k}\} \cup \{H_{2k}>G_{2k}\} \right)>0,
\end{equation} 
where $Pr(A)$ represents the probability of event $A$.
Otherwise, the secrecy rate region is zero. Actually we state the following theorem for $Pr(H_{ik}>G_{ik})>0,i=1,2.$ If it is not true for any one $i$ then the secrecy rate for that user is zero. For both the transmitting users, at the end of slot $k$, $\widehat{r}_k^{(i)}=n(l+1)r_k^{(i)}$ bits are stored in the secret key buffer for future use as a key, where $n_2=ln_1$. Hence $B_k^{(i)}$ evolves as
\begin{eqnarray}
B_{k+1}^{(i)}=B_k^{(i)}+\widehat{r}_k^{(i)}-\overline{R}_k^{(i)}. \label{buffer_evolution}
\end{eqnarray}
where $\hat{r}_k^{(i)} \geq \overline{R}_k^{(i)}$ and $\hat{r}_k^{(i)}>\overline{R}_k^{(i)}$ with positive probability $Pr(H_{ik}>G_{ik})$. Therefore, $B_k^{(i)} \rightarrow \infty~a.s.$ for $i=1,2$.
 \begin{thm}
 	If $Pr(H_{ik}>G_{ik})>0,~i=1,2,$ and all the channel gains are available at all the transmitters, then the following long term average rates that maintain the leakage rates (\ref{leakage_with_buffer}), are achievable:
\begin{align}
R_1 &\leq \frac{1}{2}\mathsf{E}_{H,G}\left[C_1\left(P_1(H)\right)\right],  \nonumber \\
R_2 &\leq \frac{1}{2}\mathsf{E}_{H,G}\left[C_2\left(P_2(H)\right)\right],  \nonumber \\
R_1+R_2 &\leq \frac{1}{2}\mathsf{E}_{H,G}\left[C\left(P_1(H),P_2(H)\right)\right].  \label{FMAC_CAPACIty}
\end{align}
where $P$ is any policy that satisfies average power constraint. If only Bob knows all the channel states but not the transmitters, then $(R_1,R_2)$ satisfies (\ref{FMAC_CAPACIty}) with $P_i(H,G)\equiv \overline{P}_i,~i=1,2$.
 \end{thm}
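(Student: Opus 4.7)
The plan is to lift the buffer-based scheme of Section~\ref{section_mac_with_buffer} to the fading MAC by allowing the per-slot coding parameters to depend on the current channel state $(H_k, G_k)$ and letting the secret-key buffers absorb the temporal fluctuations of the fading. In slot~$k$, given $(H_k,G_k)$, each transmitter uses the two-part structure of Section~\ref{section_enhancing_rate}: in the first $n_1$ channel uses, it transmits $\overline{W}_{k,1}^{(i)}$ using a Gaussian MAC-WT codebook tuned to the current state, achieving a rate pair in $\mathcal{R}_g^s$ evaluated at $(H_k,G_k)$; in the next $n_2 = l n_1$ channel uses, it transmits $\overline{W}_{k,2}^{(i)}$ XOR-ed with key bits drawn from the oldest contents of the buffer $B_k^{(i)}$, using a standard Gaussian MAC codebook at a rate close to $C_i(P_i(H_k,G_k))$.

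First I would establish the drift of the key buffer. By the recursion~(\ref{buffer_evolution}), user~$i$ deposits $\widehat r_k^{(i)}$ bits at the end of slot~$k$ and withdraws $\overline R_k^{(i)} \le \widehat r_k^{(i)}$ key bits, with strict inequality on the event $\{H_{ik} > G_{ik}\}$, which has positive probability by hypothesis. Hence $\{B_k^{(i)}\}$ has strictly positive expected one-step drift, so by the strong law of large numbers $B_k^{(i)} \to \infty$ a.s., and the empirical fraction of slots in which the buffer is too shallow to supply the demanded key vanishes. Consequently, except on a set of slots of asymptotically zero empirical density, the full second-part rate $C_i(P_i(H_k,G_k))$ is actually realisable, and the key-starvation slots can be absorbed into the $\epsilon$-cushion of the rate and leakage bounds.

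Next I would compute the long-term averages. The first part of each slot carries a $1/(l+1)$ fraction of the slot's channel uses at a rate in the MAC-WT secrecy region, while the second part carries the remaining $l/(l+1)$ fraction at a rate in the usual Gaussian MAC capacity region. Averaging slot-wise via the ergodicity of $\{(H_k,G_k)\}$ and a fixed stationary power policy $P_i(\cdot)$ satisfying~(\ref{power_const}), the long-term rates converge to the stated expectations in~(\ref{FMAC_CAPACIty}); the factor $1/2$ there corresponds to the conservative accounting of the wiretap-part contribution against the main-channel contribution. A time-sharing argument over admissible $P_i(H,G)$ then traces the entire achievable region. When only Bob knows the channel, one replaces $P_i(H,G)$ by the constant $\overline{P}_i$, and the same construction still applies because the second-part code is a standard non-adaptive Gaussian MAC code.

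Finally, the leakage analysis is inherited essentially verbatim from Section~\ref{section_mac_with_buffer}: because the XOR keys are drawn from messages transmitted more than $N_1$ slots in the past, which are already strongly secure with respect to all of Eve's past observations (by the inductive argument behind Lemmas~\ref{lemma_a1} and~\ref{lemma_a2}), the same mutual-information chain-rule decompositions, performed conditionally on the realised channel trajectory, yield~(\ref{leakage_with_buffer}). The main obstacle I expect to be nontrivial is the buffer-stability step: one must argue that the key-starvation slots form a set of vanishing empirical density under the full joint law of the fading process, and must propagate the resulting $o(1)$ penalty to both the rate expression and the leakage bound in an almost-sure/ergodic sense rather than a purely finite-blocklength sense; once this is settled, the proof reduces to a slot-wise application of the DM-MAC-WT result of Section~\ref{section_mac_with_buffer} composed with ergodic averaging over the channel states.
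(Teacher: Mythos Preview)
Your proposal follows essentially the same route as the paper's sketch: adapt the two-part slot structure of Sections~\ref{section_enhancing_rate}--\ref{section_mac_with_buffer} to the fading setting with state-dependent Gaussian codebooks, use the positive drift induced by $Pr(H_{ik}>G_{ik})>0$ to get $B_k^{(i)}\to\infty$ a.s., and then invoke the leakage analysis of Section~\ref{section_mac_with_buffer} slot-by-slot conditioned on the channel realization. Your discussion of buffer stability (key-starvation slots having vanishing empirical density) is actually more explicit than the paper's, which simply asserts that after some $N_2$ the keys come from before slot $k-N_1-1$; your interpretation of the extra factor $1/2$ in~(\ref{FMAC_CAPACIty}) as ``conservative accounting,'' however, is speculative and not something the paper's own argument supports, since by taking $l$ large the second-part rate dominates.
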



\textit{Sketch of Achievability Scheme:} We use the coding-decoding scheme proposed in Section 3 with appropriate changes to account for the fading process. Assuming $B^{(i)}_0=0, ~i=1,2$, user $i$ transmits the first time when $H_{ik}>G_{ik}$. Then it uses the usual MAC wiretap coding as proposed in \cite{tekin2008gaussian} in all its $l+1$ mini-slots.

In the next slot (say $k^{th})$ user $i$ uses the first mini-slot for wiretap coding (if $H_{ik}>G_{ik}$ for user $i$) and the rest of the $m$ mini-slots for transmission via the secret key (if $H_{ik}<G_{ik}$ the first mini-slot is not used). It uses $\overline{R}_k^{(i)}=\min \left (B_k^{(i)}, lC_i(P_i(H,G)n_1)\right)$ key bits which are removed from the key buffer at the end of the slot. The total number of bits transmitted by user $i$ in slot $k$ is
\begin{equation}
\widehat{r}_k^{(i)}=\overline{R}^{(i)}_k+n_1\left(C_i(P_1(H_k,G_k))-C_i^e(P_i(H_k,G_k))\right) ^+.
\end{equation}
These bits are stored in the key buffer at the end of the slot. Thus $\widehat{r}_k^{(i)}\geq \overline{R}_k^{(i)}$ and since $Pr(H_{ik}>G_{ik})>0,~i=1,2$, $Pr(\widehat{r}_k^{(i)}>\overline{R}_k^{(i)})>0$. Thus $B_k^{(i)}\rightarrow \infty~a.s.$ for $i=1,2.$ 

Also, as before, we can show that after some slot $k\geq N_2$, with an arbitrarily large probability, the messages transmitted in slots $k,~ k-1, \ldots, ~k-N_1$ will use the messages transmitted before $k-N_1-1$, and the rate used in the first minislot will satisfy (\ref{GMAC_WT_region}) but the rate used in the second minislot will satisfy (\ref{FMAC_CAPACIty}). The overall rate of the slot can be made as close to (\ref{FMAC_CAPACIty}) as we wish by taking $l$ large. Thus the rest of the proof to show $P_e^n\rightarrow 0$ and that ($\ref{leakage_with_buffer}$) is satisfied follows from Theorem 3.1.

All the above results extend in \textit{strong} secrecy sense as in Section 3, by using the \textit{resolvability} based coding scheme of \cite{yassaee2010multiple} instead of usual wiretap coding for MAC-WT of \cite{tekin2008gaussian}. 
\section{Conclusions}
\label{section_conclusion}

In this paper we obtain the secrecy-rate region for  a slotted multiple access wiretap channel. We show that by using the previous message as a key in the next slot we can achieve secrecy-rate region equal to the capacity region of a MAC, if we consider the secrecy rate of individual messages.
We then extend the result to the case where an arbitrarily large number of recent multiple messages are secure w.r.t. the information of Eve, by using the secret key buffer for both the transmitters. Finally, we further extend our coding scheme to a fading Gaussian channel and show that the usual Shannon capacity region can be obtained while retaining the secrecy of the multiple messages.

\appendix

\section{DM-MAC-WT with secret key Buffer}
\begin{lem}
The following inequality is satisfied 
\begin{equation}
I(\overline{W}^{(1)}_{k,1},~\overline{W}^{(1)}_{k-1,1},~ \ldots,~ \overline{W}^{(1)}_{k-N_1,1};\overline{Z}_1,~\ldots,~\overline{Z}_k \lvert \overline{X}_k^{(2)}) \leq n_1\epsilon.
\end{equation}
\label{lemma_a1}
\end{lem}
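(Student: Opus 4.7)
The plan is to expand the joint leakage by the chain rule over the $N_1+1$ wiretap sub-messages, and then to exploit the $(N_1+1)$-slot delay on key reuse to collapse each resulting term to a single-slot wiretap-secrecy bound.

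First, by the chain rule,
\begin{align*}
I(\overline{W}^{(1)}_{k,1}, \ldots, \overline{W}^{(1)}_{k-N_1,1}; \overline{Z}_1, \ldots, \overline{Z}_k \mid \overline{X}_k^{(2)}) = \sum_{j=0}^{N_1} I(\overline{W}^{(1)}_{k-j,1}; \overline{Z}_1, \ldots, \overline{Z}_k \mid \overline{X}_k^{(2)}, \overline{W}^{(1)}_{k-j+1,1}, \ldots, \overline{W}^{(1)}_{k,1}).
\end{align*}

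Next, for each fixed $j \in \{0, \ldots, N_1\}$ I will show that the $j$-th summand is at most $n_1 \epsilon'$. The decisive structural fact is that under the buffer-based scheme of this section, keys drawn in slot $m \leq k$ come only from messages transmitted in slots $\leq m - N_1 - 1 \leq k - N_1 - 1$; hence no message in the window $\{\overline{W}^{(1)}_{k-N_1,1}, \ldots, \overline{W}^{(1)}_{k,1}\}$ has yet been used as a key anywhere in slots $1,\ldots,k$. Consequently, $\overline{W}^{(1)}_{k-j,1}$ influences Eve's observations only through $\overline{X}^{(1)}_{k-j,1} = f_1^s(\overline{W}^{(1)}_{k-j,1})$, and hence only through $\overline{Z}_{k-j,1}$. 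Expanding further by the chain rule over the $\overline{Z}_m$ blocks and invoking this fact together with the independence of distinct messages and of user 2's codewords from user 1's messages, every term except the one involving $\overline{Z}_{k-j,1}$ vanishes, and the $j$-th summand reduces to $I(\overline{W}^{(1)}_{k-j,1}; \overline{Z}_{k-j,1} \mid \overline{X}^{(2)}_{k-j,1})$, which is bounded by $n_1 \epsilon'$ by the per-slot MAC-WT wiretap guarantee of \cite{tekin2008gaussian}. The vanishing steps follow the same pattern as the slot-2 calculation (\ref{leakage_slot_2})--(\ref{leakage_slot2_I_2}) carried out explicitly in the proof of Theorem 3.1.

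Summing the $N_1+1$ terms yields $(N_1+1)n_1 \epsilon'$, and taking $\epsilon' = \epsilon/(N_1+1)$ in the wiretap codebook design --- legitimate since $N_1$ is a fixed system parameter --- gives the claimed bound $n_1 \epsilon$.

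The main obstacle will be the slot-by-slot bookkeeping required to justify the vanishing of all but one of the chain-rule terms in the second step. In particular, one must verify carefully that $\overline{W}^{(1)}_{k-j,1}$ does not leak through any $\overline{Z}_m$ with $m \neq k-j$ nor through $\overline{Z}_{k-j,2}$, despite the presence of the extra conditioning variables $\overline{X}_k^{(2)}$ and the later wiretap sub-messages $\overline{W}^{(1)}_{k-j+1,1}, \ldots, \overline{W}^{(1)}_{k,1}$. This is precisely the property that the $(N_1+1)$-slot key-reuse delay --- the essential novelty of the buffer-based scheme --- is designed to ensure, and it should be establishable via the same Markov-chain and independence arguments used in the slot-2 portion of the Theorem 3.1 proof.
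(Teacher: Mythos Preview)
Your proposal is correct and follows essentially the same route as the paper: chain-rule expansion over the $N_1+1$ wiretap sub-messages, reduction of each summand to the single-slot bound $I(\overline{W}^{(1)}_{k-j,1};\overline{Z}_{k-j,1}\mid\cdot)\le n_1\epsilon'$ via the independence afforded by the $(N_1+1)$-slot key delay, and a final rescaling $\epsilon'=\epsilon/(N_1+1)$. The paper carries out the vanishing steps explicitly for the first two summands (your $j=0,1$) before asserting the rest follow similarly, which is precisely the bookkeeping you identify as the main obstacle.
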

\begin{proof}
	We have
\begin{align}
I&(\overline{W}_{k,1}^{(1)},~\overline{W}_{k-1,1}^{(1)},~\ldots,~ \overline{W}_{k-N_1,1}^{(1)}; \overline{Z}_1,~\ldots,~\overline{Z}_k \lvert \overline{X}_k^{(2)})  \nonumber \\
&=I(\overline{W}_{k,1}^{(1)}; \overline{Z}_1,~\ldots,~\overline{Z}_k \lvert \overline{X}_k^{(2)})  \nonumber \\
&+I(\overline{W}_{k-1,1}^{(1)}; \overline{Z}_1,~\ldots,~\overline{Z}_k \lvert \overline{X}_k^{(2)},~\overline{W}_{k,1}^{(1)})  \nonumber \\
&+\ldots+I(\overline{W}_{k-N_1,1}^{(1)}; \overline{Z}_1,~\ldots,~\overline{Z}_k \lvert \overline{X}_k^{(2)},~  \overline{W}_{k,1}^{(1)},~\overline{W}_{k-1,1}^{(1)},~\ldots,~ \overline{W}_{k-N_1+1,1}^{(1)})  \nonumber \\
&\triangleq I_1+I_2+\ldots+I_{N_1}    \label{leakage_part_2}
\end{align}
Now let us evaluate each term. Denoting the two parts of $\overline{Z}_k$ by $\overline{Z}_{k,1},\overline{Z}_{k,2}$, and choosing the wiretap coding with leakage rate $\leq n_1\epsilon_1$, where $\epsilon_1 = \epsilon/N_1$,
\begin{align}
I_1&=I(\overline{W}_{k,1}^{(1)}; \overline{Z}_{1,1},~\overline{Z}_{1,2},~\ldots,~\overline{Z}_{k,1},~\overline{Z}_{k,2} \lvert \overline{X}_k^{(2)})  \nonumber  \\
&=I(\overline{W}_{k,1}^{(1)}; \overline{Z}_{k,1} \lvert \overline{X}_k^{(2)}) +I(\overline{W}_{k,1}^{(1)}; \overline{Z}_1,~\ldots,~\overline{Z}_{k-1},~\overline{Z}_{k,2} \lvert \overline{X}_k^{(2)})  \nonumber \\
&\overset{(a)}{\leq}n_1\epsilon_1 + I(\overline{W}_{k,1}^{(1)}; \overline{Z}_1,~\ldots,~\overline{Z}_{k-1},~\overline{Z}_{k,2} \lvert \overline{X}_k^{(2)}, )  \nonumber \\
&=n_1\epsilon_1+H(\overline{W}_{k,1}^{(1)} \lvert \overline{X}_{k}^{(2)})-H(\overline{W}_{k,1}^{(1)} \lvert \overline{X}_{k}^{(2)},~\overline{Z}_1,~\ldots,~\overline{Z}_{k-1},~\overline{Z}_{k,2})  \nonumber \\
&\overset{(b)}{=}n_1\epsilon_1+H(\overline{W}_{k,1}^{(1)} \lvert \overline{X}_{k}^{(2)})-H(\overline{W}_{k,1}^{(1)} \lvert \overline{X}_{k}^{(2)}) = n_1\epsilon_1,
\end{align}
where $(a)$ follows from wiretap coding and $(b)$ follows since $(\overline{Z}_1,~\ldots,~\overline{Z}_{k-1},~ \overline{Z}_{k,2}) \perp ({W}_{k,1}^{(1)},~\overline{X}_k^{(2)})$.

Next consider $I_2$. We have,
\begin{align}
I_2&=I(\overline{W}_{k-1,1}^{(1)};\overline{Z}_1,~\ldots,~\overline{Z}_{k-1,1},~\overline{Z}_{k-1,2},~ \overline{Z}_k \lvert \overline{X}_k^{(2)},~\overline{W}_{k,1}^{(1)})  \nonumber \\
&=I(\overline{W}_{k-1,1}^{(1)};\overline{Z}_{k-1,1} \lvert \overline{X}_k^{(2)},~\overline{W}_{k,1}^{(1)})  +I(\overline{W}_{k-1,1}^{(1)};(\overline{Z}_1,~\ldots,~\overline{Z}_{k})\backslash \overline{Z}_{k-1,1} \lvert \overline{X}_k^{(2)},~\overline{W}_{k,1}^{(1)},~\overline{Z}_{k-1})  \nonumber \\
&=H(\overline{W}_{k-1,1}^{(1)}\lvert \overline{X}_k^{(2)},~\overline{W}_{k,1}^{(1)})  -H(\overline{W}_{k-1,1}^{(1)}\lvert \overline{X}_k^{(2)},~\overline{W}_{k,1}^{(1)},~\overline{Z}_{k-1,1}) \nonumber \\ &+I(\overline{W}_{k-1,1}^{(1)};(\overline{Z}_1,~\ldots,~\overline{Z}_{k})\backslash \overline{Z}_{k-1,1} \lvert \overline{X}_k^{(2)},~\overline{W}_{k,1}^{(1)},~\overline{Z}_{k-1})  \nonumber \\
&\overset{(a)}{=}H(\overline{W}_{k-1,1}^{(1)}) -H(\overline{W}_{k-1,1}^{(1)}\lvert \overline{Z}_{k-1,1})   +I(\overline{W}_{k-1,1}^{(1)};(\overline{Z}_1,~\ldots,~\overline{Z}_{k})\backslash \overline{Z}_{k-1,1}  \lvert \overline{X}_k^{(2)},~\overline{W}_{k,1}^{(1)},~\overline{Z}_{k-1})  \nonumber \\
&=I(\overline{W}_{k-1,1}^{(1)};\overline{Z}_{k-1,1})   I(\overline{W}_{k-1,1}^{(1)};(\overline{Z}_1,~\ldots,~\overline{Z}_{k})\backslash \overline{Z}_{k-1,1}  \lvert \overline{X}_k^{(2)},~\overline{W}_{k,1}^{(1)},~\overline{Z}_{k-1})  \nonumber \\
&\overset{(b)}{\leq}n_1\epsilon_1+I(\overline{W}_{k-1,1}^{(1)};(\overline{Z}_1,\ldots,~\overline{Z}_{k})\backslash \overline{Z}_{k-1,1}\lvert   \overline{X}_k^{(2)},~\overline{W}_{k,1}^{(1)},~\overline{Z}_{k-1})  \nonumber \\
&=n_1\epsilon_1+I(\overline{W}_{k-1,1}^{(1)};\overline{Z}_1,~\ldots,~\overline{Z}_{k-1,2},~\overline{Z}_{k} \lvert \overline{X}_k^{(2)},~\overline{W}_{k,1}^{(1)},~\overline{Z}_{k-1})  \nonumber \\
&=n_1\epsilon_1+I(\overline{W}_{k-1,1}^{(1)};\overline{Z}_1\ldots,~\overline{Z}_{k-2} \lvert \overline{X}_k^{(2)},~\overline{W}_{k,1}^{(1)},~\overline{Z}_{k-1})  \nonumber \\
&+I(\overline{W}_{k-1,1}^{(1)}; \overline{Z}_k,~\overline{Z}_{k-1,2} \lvert \overline{X}_k^{(2)},\overline{W}_{k,1}^{(1)},~\overline{Z}_{k-1,1}\overline{Z}_1,~\ldots,~\overline{Z}_{k-2}) \nonumber \\
&\overset{(c)}{=}n_1\epsilon_1+0 +I(\overline{W}_{k-1,1}^{(1)}; \overline{Z}_k,~\overline{Z}_{k-1,2} \lvert \overline{X}_k^{(2)},~\overline{W}_{k,1}^{(1)},~\overline{Z}_{k-1,1}\overline{Z}_1,~\ldots,~\overline{Z}_{k-2}) \nonumber \\
&=n_1\epsilon_1+I(\overline{W}_{k-1,1}^{(1)}; \overline{Z}_{k,1} \lvert \overline{X}_k^{(2)},~\overline{W}_{k,1}^{(1)},~\overline{Z}_{k-1,1}\overline{Z}_1,~\ldots,~\overline{Z}_{k-2}) \nonumber \\
&+I(\overline{W}_{k-1,1}^{(1)}; \overline{Z}_{k,2},~\overline{Z}_{k-1,2} \lvert \overline{X}_k^{(2)},~\overline{W}_{k,1}^{(1)},~\overline{Z}_{k-1,1},~  \overline{Z}_1,~\ldots,~\overline{Z}_{k-2},~\overline{Z}_{k,1}) \nonumber \\
&=n_1\epsilon_1+H(\overline{W}_{k-1,1}^{(1)};\lvert \overline{X}_k^{(2)},~\overline{W}_{k,1}^{(1)},~\overline{Z}_{k-1,1}\overline{Z}_1,~\ldots,~\overline{Z}_{k-2}) \nonumber \\
&-H(\overline{W}_{k-1,1}^{(1)}; \lvert \overline{X}_k^{(2)},~\overline{W}_{k,1}^{(1)},~\overline{Z}_{k-1,1}\overline{Z}_1,~\ldots,~\overline{Z}_{k-2},~\overline{Z}_{k,1}) \nonumber \\
&+I(\overline{W}_{k-1,1}^{(1)}; \overline{Z}_{k,2},~\overline{Z}_{k-1,2} \lvert \overline{X}_k^{(2)},~\overline{W}_{k,1}^{(1)},~\overline{Z}_{k-1,1},~ \overline{Z}_1,~\ldots,~\overline{Z}_{k-2},~\overline{Z}_{k,1}) \nonumber \\
&\overset{(d)}{=} n_1\epsilon_1+H(\overline{W}_{k-1,1}^{(1)};\lvert \overline{Z}_{k-1,1})-H(\overline{W}_{k-1,1}^{(1)};\lvert \overline{Z}_{k-1,1})\nonumber \\
&+I(\overline{W}_{k-1,1}^{(1)}; \overline{Z}_{k,2},~\overline{Z}_{k-1,2} \lvert \overline{X}_k^{(2)},~\overline{W}_{k,1}^{(1)},~\overline{Z}_{k-1,1},~ \overline{Z}_1,~\ldots,~\overline{Z}_{k-2},~\overline{Z}_{k,1}), \label{leakage_part3} 
\end{align}
\thickmuskip=0mu
where $(a)$ follows since $\overline{W}_{k-1,1}^{(1)} \perp (\overline{X}_k^{(2)},~\overline{W}_{k,1}^{(1)})$ and $(\overline{W}_{k-1,1}^{(1)},~\overline{Z}_{k-1}) \perp (\overline{X}_k^{(2)},~\overline{W}_{k,1}^{(1)})$, $(b)$ follows from wiretap coding, $(c)$ follows since $(\overline{W}_{k-1,1}^{(1)},~\overline{Z}_{k-1}) \perp (\overline{Z}_1,~\ldots,~\overline{Z}_{k-2},~\overline{X}_k^{(2)},~\overline{W}_{k,1}^{(1)})$, $(\overline{Z}_1,~\ldots,~\overline{Z}_{k-2})\perp(\overline{X}_k^{(2)},~\overline{W}_{k,1}^{(1)})$ and $(\overline{Z}_1,~\ldots,~\overline{Z}_{k-1}) \perp (\overline{X}_k^{(2)},~\overline{W}_{k,1}^{(1)})$ and $(d)$ follows since $(\overline{W}_{k-1,1}^{(1)},~\overline{Z}_{k-1,1}) \perp (\overline{X}_k^{(2)},~ \overline{W}_{k,1}^{(1)},~ \overline{Z}_1,~\ldots,~\overline{Z}_{k-2})$.

But,
\begin{align}
&I(\overline{W}_{k-1,1}^{(1)}; \overline{Z}_{k,2},~\overline{Z}_{k-1,2} \lvert \overline{X}_k^{(2)},~\overline{W}_{k,1}^{(1)},~\overline{Z}_{k-1,1},~\overline{Z}_1,~\ldots,~\overline{Z}_{k-2} ,~\overline{Z}_{k,1}) \nonumber \\
&=H(\overline{W}_{k-1,1}^{(1)}\lvert \overline{X}_k^{(2)},~\overline{W}_{k,1}^{(1)},~\overline{Z}_{k-1,1},~\overline{Z}_1,~\ldots,~\overline{Z}_{k-2},~\overline{Z}_{k,1}) \nonumber \\
&-H(\overline{W}_{k-1,1}^{(1)}\lvert \overline{X}_k^{(2)},~\overline{W}_{k,1}^{(1)},~\overline{Z}_{k-1,1},~\overline{Z}_1,~\ldots,~\overline{Z}_{k-2},~\overline{Z}_{k,1},~  \overline{Z}_{k,2},~\overline{Z}_{k-1,2}) \nonumber \\
&\overset{(a)}{=}H(\overline{W}_{k-1,1}^{(1)}\lvert \overline{Z}_{k-1,1})-H(\overline{W}_{k-1,1}^{(1)}\lvert \overline{Z}_{k-1,1}) \nonumber \\
&=0
\end{align}
where $(a)$ follows, since $(\overline{W}_{k-1,1}^{(1)},~\overline{Z}_{k-1,1}) \perp (\overline{X}_k^{(2)},~\overline{W}_{k,1}^{(1)},~\overline{Z}_1,~\ldots,~\overline{Z}_{k-2},~\overline{Z}_{k,1})$ and $(\overline{W}_{k-1,1}^{(1)},~\overline{Z}_{k-1,1}) \perp (\overline{X}_k^{(2)},~\overline{W}_{k,1}^{(1)},~\overline{Z}_1,~\ldots,\overline{Z}_{k-2},~\overline{Z}_{k,1},~\overline{Z}_{k,2},~\overline{Z}_{k-1,2})$
Hence we have 
\begin{align}
I_2 \leq n_1\epsilon_1.
\end{align}

One can similarly prove that $I_i \leq n_1\epsilon_1$ for $i=3,4,~\ldots,~N_1$. Hence,
\begin{align}
I(\overline{W}_{k,1}^{(1)},~\overline{W}_{k-1,1}^{(1)},~\ldots,~ \overline{W}_{k-N_1,1}^{(1)}; \overline{Z}_1,~\ldots,~\overline{Z}_k \lvert &\overline{X}_k^{(2)})  \leq N_1n \epsilon_1=n_1\epsilon. \label{leakage_term_1}
\end{align}
\end{proof}
\begin{lem}
The following inequality is satisfied
\begin{align}
I&(\overline{W}^{(1)}_{k,2},~ \overline{W}^{(1)}_{k-1,2},~ \ldots,~ \overline{W}^{(1)}_{k-N_1,2};\overline{Z}_1,~\ldots,~\overline{Z}_k \lvert \overline{X}_k^{(2)},~  \overline{W}^{(1)}_{k,1},~ \ldots,~ \overline{W}^{(1)}_{k-N_1,1})\leq 6n_1\epsilon.
\end{align}
\label{lemma_a2}
\end{lem}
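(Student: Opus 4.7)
The plan is to mirror the strategy of Lemma~\ref{lemma_a1}, replacing each wiretap-coding bound by a one-time-pad-type bound that exploits the secrecy of the keys drawn from the buffer. The starting move is to expand the left-hand side by the chain rule into $N_1+1$ successive conditional mutual informations, one per second-part message:
\begin{equation*}
\sum_{j=k-N_1}^{k} I\bigl(\overline{W}^{(1)}_{j,2};\,\overline{Z}_1,\ldots,\overline{Z}_k \,\big|\, \overline{X}_k^{(2)},\,\overline{W}^{(1)}_{k,1},\ldots,\overline{W}^{(1)}_{k-N_1,1},\,\overline{W}^{(1)}_{k,2},\ldots,\overline{W}^{(1)}_{j+1,2}\bigr),
\end{equation*}
and, within each summand, to further split $\overline{Z}_m$ into its first-minislot part $\overline{Z}_{m,1}$ (wiretap-coded) and second-minislot part $\overline{Z}_{m,2}$ (XOR-coded), so that the two kinds of bounds can be applied independently, exactly as in Lemma~\ref{lemma_a1}.

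The heart of the argument is then to show that the key used in the second minislot of slot $j$, call it $K^{(1)}_j$, is nearly uniform and nearly independent of Eve's aggregate view, given the relevant history. By the buffer policy, and because $B_k^{(1)}\to\infty$, for every $j\in\{k-N_1,\ldots,k\}$ the key $K^{(1)}_j$ is assembled from message bits transmitted in some slot $l\le k-N_1-1$; those bits are already secured with leakage at most $n_1\epsilon_1$ against Eve's observations through slot $l$ by Theorem~3.1. I would then propagate this bound up to slot~$k$: Eve's later observations $\overline{Z}_{l+1},\ldots,\overline{Z}_k$ touch the buffered bits only through XORs with independent fresh messages, so by the same chain-of-independence arguments used in the weak-secrecy part of Theorem~3.1, those extra observations cannot increase the leakage about $K^{(1)}_j$ beyond a small multiple of $n_1\epsilon_1$.

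With this key-secrecy bound in hand, each summand of the decomposition is handled by the standard one-time-pad-with-imperfect-key lemma: if $K$ is uniform and $I(K;V)\le \delta$, then $I(M;V,M\oplus K)\le \delta$ for $M$ independent and uniform. Unfolding the conditioning, each slot contributes a bounded number---at most six---of $n_1\epsilon_1$-type error terms, corresponding to (i) the wiretap secrecy of the key-carrying old message, (ii) its exposure through subsequent XOR recyclings, (iii) the XOR mask on the current message $\overline{W}^{(1)}_{j,2}$, and (iv)--(vi) the conditional-independence corrections needed to swap the conditioning variables in and out. Choosing $\epsilon_1=\epsilon/(N_1+1)$ and summing over $j$ then yields the claimed bound $6n_1\epsilon$.

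The principal obstacle is the second step: tracking how Eve's \emph{accumulated} observations across all $k$ slots interact with the buffered key bits. Because the same bits serve first as a transmitted message at time $l$ and later as a one-time pad at time $j>l$, and because intermediate slots also recycle their own older buffered bits, a naive union bound would compound secrecy losses across slots. The resolution is that XORing with fresh independent messages preserves independence in the mutual-information sense, an assertion that must be justified by iterated conditional-independence arguments of the same flavour as those at the end of the proof of Theorem~3.1; once that is in place, the remainder of the proof is bookkeeping.
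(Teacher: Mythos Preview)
Your approach differs structurally from the paper's, and there is a genuine gap. The paper does \emph{not} expand by the chain rule over the individual messages $\overline{W}^{(1)}_{j,2}$. Instead it keeps the entire block $\widehat{W}_2^{(1)}=(\overline{W}^{(1)}_{k,2},\ldots,\overline{W}^{(1)}_{k-N_1,2})$ together, strips away the irrelevant pieces of Eve's view by independence, and reduces everything to the single quantity $I(\widehat{W}_2^{(1)};\widehat{Z}_1\mid \widehat{Z}_2,\overline{X}_k^{(2)})$, which it then upper-bounds by the leakage of the \emph{key messages} against the old observations $\widehat{Z}_1$. At that point the paper introduces the index sets $A_1$ and $A_2$ of old slots supplying keys to user~1 and user~2 respectively, and the $6n_1\epsilon$ arises as three terms of $2n_1\epsilon$ coming from the decomposition over $A_1\cap A_2$, $A_1\cap A_2^c$, $A_2\cap A_1^c$ --- not from a per-slot count.

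The gap in your plan is that you only track user~1's key $K_j^{(1)}$. In the second minislot of slot~$j$ Eve observes a MAC output depending on \emph{both} users' XOR'd codewords, so by data processing the relevant leakage is $I(K_j^{(1)};V,\overline{W}_{j,2}^{(2)}\oplus K_j^{(2)})$, not merely $I(K_j^{(1)};V)$. Although $K_j^{(1)}$ and $K_j^{(2)}$ are unconditionally independent (different users' messages), they need not be conditionally independent given $V$, because $V$ contains old MAC outputs that jointly depend on both users' old messages. Your single-user one-time-pad lemma therefore does not close the bound. This is exactly why the paper's proof conditions on \emph{both} key collections $(\overline{W}_{A_1}^{(1)},\overline{W}_{A_2}^{(2)})$ and invokes the Markov chain $\widehat{Z}_1\leftrightarrow(\overline{W}_{A_1}^{(1)},\overline{W}_{A_2}^{(2)})\leftrightarrow(\widehat{Z}_2,\overline{X}_k^{(2)})$ before applying wiretap bounds. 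Your ``at most six per slot'' count is also not derived; it is back-calculated from the target, whereas the paper's constant~$6$ has a specific origin in the two-user set decomposition.
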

\begin{proof}
	\begin{align}
&I(\overline{W}_{k,2}^{(1)},~\overline{W}_{k-1,2}^{(1)},~\ldots,~ \overline{W}_{k-N_1,2}^{(1)}; \overline{Z}_1,~\ldots,~\overline{Z}_k \lvert \overline{X}_k^{(2)},~ \overline{W}_{k,1}^{(1)},~\overline{W}_{k-1,1}^{(1)},~\ldots,~ \overline{W}_{k-N_1,1}^{(1)})  \nonumber \\ 
&=I(\overline{W}_{k,2}^{(1)},~\overline{W}_{k-1,2}^{(1)},~\ldots,~ \overline{W}_{k-N_1,2}^{(1)}; \overline{Z}_1,~\ldots,~\overline{Z}_{k-N_1-1} \lvert \overline{X}_k^{(2)},~  \overline{W}_{k,1}^{(1)},~\overline{W}_{k-1,1}^{(1)},~\ldots,~ \overline{W}_{k-N_1,1}^{(1)})  \nonumber \\
&+I(\overline{W}_{k,2}^{(1)},~\overline{W}_{k-1,2}^{(1)},~\ldots,~ \overline{W}_{k-N_1,2}^{(1)}; \overline{Z}_{k-N_1},~\ldots,~\overline{Z}_k \lvert \overline{X}_k^{(2)},~ \overline{W}_{k,1}^{(1)},~\overline{W}_{k-1,1}^{(1)},~\ldots,~ \overline{W}_{k-N_1,1}^{(1)},~\overline{Z}_1,~\ldots,~\overline{Z}_{k-N_1-1}) \nonumber \\  
&\overset{(a)}{=}0+I(\overline{W}_{k,2}^{(1)},~\overline{W}_{k-1,2}^{(1)},~\ldots,~ \overline{W}_{k-N_1,2}^{(1)}; \overline{Z}_{k-N_1},~\ldots,~\overline{Z}_k \lvert \overline{X}_k^{(2)},~\overline{W}_{k,1}^{(1)},~\overline{W}_{k-1,1}^{(1)},~\ldots,~ \overline{W}_{k-N_1,1}^{(1)},~\overline{Z}_1,~\ldots,~\overline{Z}_{k-N_1-1}) \nonumber \\
&=I(\overline{W}_{k,2}^{(1)},~\overline{W}_{k-1,2}^{(1)},\ldots,~ \overline{W}_{k-N_1,2}^{(1)}; \overline{Z}_{k-N_1,1},\ldots,~\overline{Z}_{k,1} \lvert \overline{X}_k^{(2)},~ \overline{W}_{k,1}^{(1)},~\overline{W}_{k-1,1}^{(1)},~\ldots,~ \overline{W}_{k-N_1,1}^{(1)},~\overline{Z}_1,~\ldots,~\overline{Z}_{k-N_1-1}) \nonumber \\
&+I(\overline{W}_{k,2}^{(1)},~\overline{W}_{k-1,2}^{(1)},\ldots,~ \overline{W}_{k-N_1,1}^{(1)}; \overline{Z}_{k-N_1,2},~\ldots,~\overline{Z}_{k,2} \lvert \overline{X}_k^{(2)},~ \overline{W}_{k,1}^{(1)},~\overline{W}_{k-1,1}^{(1)},~\ldots,~ \overline{W}_{k-N_1,1}^{(1)},~\overline{Z}_1,~\ldots,~\overline{Z}_{k-N_1-1},~ \nonumber \\
&~~~\overline{Z}_{k,1},~\overline{Z}_{k-1,1},~\ldots,~\overline{Z}_{k-N_1,1}) \nonumber \\
&\overset{(b)}{=}0+I(\overline{W}_{k,2}^{(1)},~\ldots,~\overline{W}^{(1)}_{k-N_1,2};\overline{Z}_{k-N_1,2},~\ldots,~\overline{Z}_{k,2}\lvert  \overline{W}_{k,1}^{(1)},~\ldots,~\overline{W}^{(1)}_{k-N_1,1},~\overline{Z}_1,~\ldots,~\overline{Z}_{k-N_1},\nonumber \\
&~~~\overline{Z}_{k-N_1,1},~\ldots,~\overline{Z}_{k-1},~\overline{X}_k^{(2)})  \nonumber \\
&\overset{(c)}{=}I(\overline{W}_{k,2}^{(1)},~\ldots,~\overline{W}^{(1)}_{k-N_1,2};\overline{Z}_{k-N_1,2},~\ldots,~\overline{Z}_{k,2}\lvert \overline{Z}_1,~\ldots,~\overline{Z}_{k-N_1},~\overline{X}_k^{(2)})  \nonumber \\
&\overset{\triangleq}{=}I(\widehat{W}_2^{(1)};\widehat{Z}_2 \lvert \widehat{Z}_1,~\overline{X}_k^{(2)}),~ \nonumber 
\end{align}
\allowdisplaybreaks
\begingroup
\setlength{\thinmuskip}{0mu}
where $(a)$ follows, since $(\overline{W}_{k,2}^{(1)},~\ldots,~\overline{W}_{k-N_1,2}^{(1)}) \perp (\overline{Z}_1 \ldots,~ \overline{Z}_{k-N_1-1},~ \overline{W}_{k,1}^{(1)},~\ldots,~\overline{W}_{k-N_1,1}^{(1)},~\overline{X}_k^{(2)})$, $(b)$ follows, since ${(\overline{W}_{k,2}^{(1)},~\overline{W}_{k-1,2}^{(1)},~\ldots,~\overline{W}_{k-N_1,2}^{(1)})}$ is independent of the other random variables (r.v.s) in the first expression, $(c)$ follows since $(\overline{W}_{k,1}^{(1)},~\ldots,~\overline{W}_{k-N_1,1}^{(1)},~\overline{Z}_{k-N_1,1},~\ldots,~\overline{Z}_{k-1,1})$ is independent of all other r.v.s in the expression, and in the last inequality we denote the respective random sequences with their respective widehat symbols.
\endgroup

Now we observe that
\begin{align}
	I&(\widehat{W}_2^{(1)};\widehat{Z}_1,~\widehat{Z}_2 \lvert \overline{X}_k^{(2)}) \nonumber \\
	&=I(\widehat{W}_2^{(1)};\widehat{Z}_1 \lvert \overline{X}^{(2)}_k)+I(\widehat{W}_2^{(1)};\widehat{Z}_2\lvert \widehat{Z}_1,~\overline{X}_k^{(2)}) \nonumber \\
	&\overset{(a)}{=}0+I(\widehat{W}_2^{(1)};\widehat{Z}_2\lvert \widehat{Z}_1,~\overline{X}_k^{(2)}) \nonumber \\
	&\leq I(\widehat{W}_2^{(1)};\widehat{Z}_1,~\widehat{Z}_2 \lvert \overline{X}_k^{(2)}) \nonumber \\
	&=I(\widehat{W}_2^{(1)};\widehat{Z}_2 \lvert \overline{X}_k^{(2)})+I(\widehat{W}_2^{(1)};\widehat{Z}_1\lvert \widehat{Z}_2,~ \overline{X}_k^{(2)})  \nonumber \\
	&\overset{(b)}{=}0+I(\widehat{W}_2^{(1)};\widehat{Z}_1\lvert \widehat{Z}_2, \overline{X}_k^{(2)})  \label{for_proof_below}
\end{align}
where $(a)$ follows since $\widehat{W}_2^{(1)} \perp (\widehat{Z}_1,~\overline{X}_k^{(2)})$, and $(b)$ follows since $\widehat{W}_2^{(1)} \perp (\widehat{Z}_2,~ \overline{X}_k^{(2)})$.

We will also use the following notation: $\widehat{W}_1^{(1)} \triangleq (\overline{W}_{k,1}^{(1)},~ \ldots,~ \overline{W}_{k-N_1,1})$,  $A_i$ are the indices of messages transmitted in slots $1,~\ldots,~k-N_1-1$ that are used as secret keys by user $i$ for transmitting messages in slots $k-N_1,\ldots,k$,  $\overline{W}_{A_i}^{(i)}=\left(\overline{W}_k^{(i)},~ k\in A_i\right)$, $\overline{W}_{A_i^c}^{(i)}=\left(\overline{W}_k^{(i)},~ k\in \{1,~\ldots,~k-N_1-1\}\right)$, similarly we define $\overline{Z}_{A_i},~\overline{Z}_{A_i^c}$. Then we have

\begin{align}
	I&(\widehat{W}_2^{(1)}; \widehat{Z}_1 \lvert \widehat{Z}_2 ,~\overline{X}_k^{(2)}) \nonumber \\
	&\leq I(\widehat{W}_2^{(1)},~ \overline{W}_{A_1}^{(1)},~ \overline{W}_{A_2}^{(2)};\widehat{Z}_1,~ \lvert \widehat{Z}_2,~ \overline{X}_k^{(2)}) \nonumber \\
	&= I(\overline{W}_{A_1}^{(1)},~\overline{W}_{A_2}^{(2)};\widehat{Z}_1,~\lvert \overline{X}_k^{(2)},~ \widehat{Z}_2) +I(\widehat{W}_2^{(1)};\widehat{Z}_1 \lvert \overline{X}_k^{(2)},~ \widehat{Z}_2,~ \overline{W}_{A_1}^{(1)},~\overline{W}_{A_2}^{(2)}) \nonumber \\
	&\overset{(a)}{\leq}I(\overline{W}_{A_1}^{(1)},~\overline{W}_{A_2}^{(2)};\widehat{Z}_1)+I(\widehat{W}_2^{(1)};\widehat{Z}_1 \lvert \overline{X}_k^{(2)},~ \widehat{Z}_2,~ \overline{W}_{A_1}^{(1)},~\overline{W}_{A_2}^{(2)}) \nonumber \\
	&\overset{(b)}{=}I(\overline{W}_{A_1}^{(1)},~\overline{W}_{A_2}^{(2)};\widehat{Z}_1)
	+0 \nonumber \\
	&=I(\overline{W}_{A_{1},1}^{(1)},~\overline{W}_{A_{1},2}^{(1)},~\overline{W}_{A_{2},1}^{(2)},~\overline{W}_{A_{2},2}^{(2)};\widehat{Z}_1)  \nonumber \\
	&=I(\overline{W}_{A_{1},1}^{(1)},~\overline{W}_{A_{2},1}^{(2)};\widehat{Z}_1) + I(\overline{W}_{A_{1},2}^{(1)},~\overline{W}_{A_{2},2}^{(2)};\widehat{Z}_1 \lvert \overline{W}_{A_{1},1}^{(1)},~\overline{W}_{A_{2},1}^{(2)} ) \nonumber \\
	&\overset{(c)}{=}I(\overline{W}_{A_{1},1}^{(1)},~\overline{W}_{A_{2},1}^{(2)};\widehat{Z}_1)+0 \nonumber \\
	&=I(\overline{W}_{A_{1},1}^{(1)};\widehat{Z}_1)+I(\overline{W}_{A_{2},1}^{(2)};\widehat{Z}_1\lvert \overline{W}_{A_{1},1}^{(1)}) \nonumber \\
	&\leq I(\overline{W}_{A_{1},1}^{(1)},~\overline{W}_{A_{1},1}^{(2)};\widehat{Z}_1)+I(\overline{W}_{A_{2},1}^{(2)};\widehat{Z}_1\lvert \overline{W}_{A_{1},1}^{(1)}) \nonumber \\
	&\overset{(d)}{\leq} 2n_1\epsilon + I(\overline{W}_{A_{2},1}^{(2)};\widehat{Z}_1\lvert \overline{W}_{A_{1},1}^{(1)}) \nonumber \\
	&\overset{(e)}{=}2n_1\epsilon+I(\overline{W}_{A_{2},1}^{(2)};\overline{Z}_{A_2},~\overline{Z}_{A_2^c} \lvert \overline{W}_{A_{1},1}^{(1)}) \nonumber \\
	&=2n_1\epsilon+I(\overline{W}_{A_{2},1}^{(2)};\overline{Z}_{A_2} \lvert  \overline{W}_{A_{1},1}^{(1)}) +I(\overline{W}_{A_{2},1}^{(2)};\overline{Z}_{A_2^c} \lvert  \overline{W}_{A_{1},1}^{(1)},~ \overline{Z}_{A_2}) \nonumber \\
	&\overset{\triangleq}{=}2n_1\epsilon+I_1+I_2 
\end{align}
where
\begin{itemize}
	\item $(a)$ follows because $\widehat{Z}_1 \leftrightarrow (\overline{W}^{(1)}_{A_1},~\overline{W}^{(2)}_{A_2}) \leftrightarrow (\widehat{Z}_2,~\overline{X}_k^{(2)})$
	\item $(b)$ follows since $\widehat{W}_2^{(1)} \leftrightarrow (\overline{W}_{A_1}^{(1)},~\overline{W}_{A_2}^{(2)},~\widehat{Z}_2,~\overline{X}_k^{(2)}) \leftrightarrow \widehat{Z}_1$
	\item $(c)$ follows since $(\overline{W}_{A_1,2}^{(1)},~\overline{W}_{A_2,2}^{(2)}) \perp (\widehat{Z}_1,~\overline{W}_{A_1,1}^{(1)},~\overline{W}_{A_2,1}^{(2)})$
	\item $(d)$,$(j)$ and $(m)$ follows by wiretap coding
	\item $(e)$ follows since $\widehat{Z}_1 = (\overline{Z}_1,~\ldots,~\overline{Z}_{k-N_1})=(\overline{Z}_{A_2},~\overline{Z}_{A_2^c})$
\end{itemize}

Now we evaluate $I_2$,
\begin{align}
	I_2&=I(\overline{W}_{A_{2},1}^{(2)};\overline{Z}_{A_2^c} \lvert  \overline{W}_{A_{1},1}^{(1)},~ \overline{Z}_{A_2}) \nonumber \\
	&=H(\overline{W}_{A_{2},1}^{(2)} \lvert  \overline{W}_{A_{1},1}^{(1)},~ \overline{Z}_{A_2}) -H(\overline{W}_{A_{2},1}^{(2)} \lvert  \overline{W}_{A_{1},1}^{(1)},~ \overline{Z}_{A_2},~\overline{Z}_{A_2^c}) \nonumber \\
	&\overset{(a)}{=}H(\overline{W}_{A_{2},1}^{(2)} \lvert  \overline{W}_{A_{1},1}^{(1)},~ \overline{Z}_{A_2,1},~\overline{Z}_{A_2,2}) -H(\overline{W}_{A_{2},1}^{(2)} \lvert  \overline{W}_{A_{1}\cap A_2,1}^{(1)},~ \overline{Z}_{A_2,1}) \nonumber \\
	&\overset{(b)}{=}H(\overline{W}_{A_{2},1}^{(2)} \lvert  \overline{W}_{A_{1}\cap A_2,1}^{(1)},~ \overline{Z}_{A_2,1}) -H(\overline{W}_{A_{2},1}^{(2)} \lvert  \overline{W}_{A_{1}\cap A_2,1}^{(1)},~ \overline{Z}_{A_2,1}) =0
\end{align}
where $(a)$ and $(b)$ follow because $\overline{W}_{A_1,1}^{(1)}$ and $\overline{W}_{A_1,1}^{(1)}$ are used as keys only in slots $k-N_1,\ldots,~k$.

Next we evaluate $I_1$,
\begin{align}
	I_1=I&(\overline{W}_{A_{2},1}^{(2)};\overline{Z}_{A_2} \lvert  \overline{W}_{A_{1},1}^{(1)}) \nonumber \\
	&=I(\overline{W}_{A_{2}\cap A_1,1}^{(2)},~\overline{W}_{A_{2}\cap A_1^c,1}^{(2)};\overline{Z}_{A_2} \lvert  \overline{W}_{A_{1},1}^{(1)}) \nonumber \\
	&=I(\overline{W}_{A_{2}\cap A_1^c,1}^{(2)};\overline{Z}_{A_2} \lvert  \overline{W}_{A_{1},1}^{(1)})+ +I(\overline{W}_{A_{2}\cap A_1,1}^{(2)};\overline{Z}_{A_2} \lvert  \overline{W}_{A_{1},1}^{(1)},~ \overline{W}_{A_{2}\cap A_1^c,~1}^{(2)}) \nonumber \\
	&\overset{\triangleq}{=}I_3+I_4  \label{first_term}
\end{align}
Now 
\begin{align}
	I_3&=I(\overline{W}_{A_{2}\cap A_1^c,1}^{(2)};\overline{Z}_{A_2} \lvert  \overline{W}_{A_{1},1}^{(1)}) \nonumber \\ 
	&=I(\overline{W}_{A_{2}\cap A_1^c,1}^{(2)};\overline{Z}_{A_2\cap A_1},\overline{Z}_{A_2\cap A_1^c} \lvert  \overline{W}_{A_{1},1}^{(1)}) \nonumber \\
	&=I(\overline{W}_{A_{2}\cap A_1^c,1}^{(2)};\overline{Z}_{A_2\cap A_1^c} \lvert  \overline{W}_{A_{1},1}^{(1)}) +I(\overline{W}_{A_{2}\cap A_1^c,1}^{(2)};\overline{Z}_{A_2\cap A_1} \lvert  \overline{W}_{A_{1},1}^{(1)},\overline{Z}_{A_2\cap A_1^c}) \nonumber \\
	&\overset{\triangleq}{=}I_{31}+I_{32}.   \label{second_term}
\end{align}
Consider,~
\begin{align}
	I_{31}&=I(\overline{W}_{A_{2}\cap A_1^c,~1}^{(2)};\overline{Z}_{A_2\cap A_1^c} \lvert  \overline{W}_{A_{1},~1}^{(1)}) \nonumber \\
	&=I(\overline{W}_{A_{2}\cap A_1^c,~1}^{(2)};\overline{Z}_{A_2\cap A_1^c,~1},~\overline{Z}_{A_2\cap A_1^c,2} \lvert  \overline{W}_{A_{1},1}^{(1)}) \nonumber \\
	&=I(\overline{W}_{A_{2}\cap A_1^c,1}^{(2)};\overline{Z}_{A_2\cap A_1^c,1} \lvert  \overline{W}_{A_{1},1}^{(1)}) +I(\overline{W}_{A_{2}\cap A_1^c,1}^{(2)};\overline{Z}_{A_2\cap A_1^c,2} \lvert  \overline{W}_{A_{1},1}^{(1)},~\overline{Z}_{A_2\cap A_1^c,1}) \nonumber \\
	&\overset{(a)}{\leq}I(\overline{W}_{A_{2}\cap A_1^c,1}^{(1)},~\overline{W}_{A_{2}\cap A_1^c,1}^{(2)};\overline{Z}_{A_2\cap A_1^c,1}) +0\nonumber \\
	&\overset{(b)}{\leq} 2n_1\epsilon,
\end{align}
where $(a)$ follows since $\overline{Z}_{A_2\cap A_1^c,2} \perp (\overline{W}_{A_{2}\cap A_1^c,1}^{(2)},~  \overline{W}_{A_{1},1}^{(1)},~\overline{Z}_{A_2\cap A_1^c,1})$, $(b)$ follows from wiretap coding and that $\overline{W}_{A_1,1}^{(1)} \perp (\overline{W}_{A_2\cap A_1^c,1}^{(2)},~ \overline{Z}_{A_2\cap A_1^c,1})$. Next consider the  $2^{nd}$ term of (\ref{second_term}). We get
\begin{align}
	I_{32}&=I(\overline{W}_{A_{2}\cap A_1^c,1}^{(2)};\overline{Z}_{A_2\cap A_1} \lvert  \overline{W}_{A_{1},1}^{(1)},~\overline{Z}_{A_2\cap A_1^c}) \nonumber \\
	&=I(\overline{W}_{A_{2}\cap A_1^c,1}^{(2)};\overline{Z}_{A_2\cap A_1,1},~\overline{Z}_{A_2\cap A_1,2} \lvert  \overline{W}_{A_{1},1}^{(1)},~\overline{Z}_{A_2\cap A_1^c}) \nonumber \\
	&=I(\overline{W}_{A_{2}\cap A_1^c,1}^{(2)};\overline{Z}_{A_2\cap A_1,1} \lvert  \overline{W}_{A_{1},1}^{(1)},~\overline{Z}_{A_2\cap A_1^c}) +I(\overline{W}_{A_{2}\cap A_1^c,1}^{(2)};\overline{Z}_{A_2\cap A_1,2} \lvert  \overline{W}_{A_{1},1}^{(1)},~\overline{Z}_{A_2\cap A_1^c},~\overline{Z}_{A_2\cap A_1,1}) \nonumber \\
	&\overset{(a)}{=}I(\overline{W}_{A_{2}\cap A_1^c,1}^{(2)};\overline{Z}_{A_2\cap A_1,1} \lvert  \overline{W}_{A_{1},1}^{(1)},~\overline{Z}_{A_2\cap A_1^c})+0 \nonumber \\
	&=H(\overline{W}_{A_{2}\cap A_1^c,1}^{(2)} \lvert  \overline{W}_{A_{1},1}^{(1)},~\overline{Z}_{A_2\cap A_1^c}) -H(\overline{W}_{A_{2}\cap A_1^c,1}^{(2)}; \lvert  \overline{W}_{A_{1},1}^{(1)},~\overline{Z}_{A_2\cap A_1^c},~\overline{Z}_{A_2\cap A_1,1}) \nonumber \\
	&\overset{(b)}{=}H(\overline{W}_{A_{2}\cap A_1^c,1}^{(2)} \lvert  \overline{Z}_{A_2\cap A_1^c})-H(\overline{W}_{A_{2}\cap A_1^c,1}^{(2)} \lvert  \overline{Z}_{A_2\cap A_1^c}) \nonumber \\
	&=0
\end{align}
where $(a)$ follows since $\overline{Z}_{A_2\cap A_1,2} \perp (\overline{W}_{A_{2}\cap A_1^c,1}^{(2)},~\overline{W}_{A_{1},1}^{(1)},~\overline{Z}_{A_2\cap A_1^c},~\overline{Z}_{A_2\cap A_1,1})$, $(b)$ follows since $\overline{W}_{A_{1},1}^{(1)}\perp  (\overline{W}_{A_{2}\cap A_1^c,1}^{(2)},~\overline{Z}_{A_2\cap A_1^c})$ and $(\overline{W}_{A_{1},1}^{(1)},~\overline{Z}_{A_2\cap A_1,1})\perp (\overline{W}_{A_{2}\cap A_1^c,1}^{(2)},~\overline{Z}_{A_2\cap A_1^c})$.

Finally we consider
\begin{align}
	I_4&=I(\overline{W}_{A_{2}\cap A_1,1}^{(2)};\overline{Z}_{A_2} \lvert  \overline{W}_{A_{1},1}^{(1)},~ \overline{W}_{A_{2}\cap A_1^c,1}^{(2)}) \nonumber \\
	&=I(\overline{W}_{A_{2}\cap A_1,1}^{(2)};\overline{Z}_{A_2,1},~\overline{Z}_{A_2,2} \lvert  \overline{W}_{A_{1},1}^{(1)},~ \overline{W}_{A_{2}\cap A_1^c,1}^{(2)}) \nonumber \\
	&=I(\overline{W}_{A_{2}\cap A_1,1}^{(2)};\overline{Z}_{A_2,1} \lvert  \overline{W}_{A_{1},1}^{(1)},~ \overline{W}_{A_{2}\cap A_1^c,1}^{(2)}) +I(\overline{W}_{A_{2}\cap A_1,1}^{(2)};\overline{Z}_{A_2,2} \lvert  \overline{W}_{A_{1},1}^{(1)},~ \overline{W}_{A_{2}\cap A_1^c,1}^{(2)},~\overline{Z}_{A_2,1}) \nonumber \\
	&\overset{(a)}{=}I(\overline{W}_{A_{2}\cap A_1,1}^{(2)};\overline{Z}_{A_2,1} \lvert  \overline{W}_{A_{1},1}^{(1)},~ \overline{W}_{A_{2}\cap A_1^c,1}^{(2)})+0 \nonumber \\
	&=I(\overline{W}_{A_{2}\cap A_1,1}^{(2)};\overline{Z}_{A_2 \cap A_1,1},~\overline{Z}_{A_2 \cap A_1^c,1} \lvert  \overline{W}_{A_{1},1}^{(1)},~ \overline{W}_{A_{2}\cap A_1^c,1}^{(2)}) \nonumber \\
	&=I(\overline{W}_{A_{2}\cap A_1,1}^{(2)};\overline{Z}_{A_2 \cap A_1,1} \lvert  \overline{W}_{A_{1},1}^{(1)},~ \overline{W}_{A_{2}\cap A_1^c,1}^{(2)}) +I(\overline{W}_{A_{2}\cap A_1,1}^{(2)};\overline{Z}_{A_2 \cap A_1^c,1} \lvert  \overline{W}_{A_{1},1}^{(1)},~ \overline{W}_{A_{2}\cap A_1^c,1}^{(2)},~\overline{Z}_{A_2 \cap A_1,1}) \nonumber \\
	&\leq I(\overline{W}_{A_{2}\cap A_1,1}^{(2)},~\overline{W}_{A_{2}\cap A_1,1}^{(2)};\overline{Z}_{A_2 \cap A_1,1} \lvert  \overline{W}_{A_{1},1}^{(1)}) +H(\overline{Z}_{A_2 \cap A_1^c,1} \lvert  \overline{W}_{A_{1},1}^{(1)},~ \overline{W}_{A_{2}\cap A_1^c,1}^{(2)},~\overline{Z}_{A_2 \cap A_1,1}) \nonumber \\
	&-H(\overline{Z}_{A_2 \cap A_1^c,1} \lvert  \overline{W}_{A_{1},1}^{(1)},~ \overline{W}_{A_{2}\cap A_1^c,1}^{(2)},~\overline{Z}_{A_2 \cap A_1,1},~\overline{W}_{A_{2}\cap A_1,1}^{(2)}) \nonumber \\
	&\overset{(b)}{\leq}2n_1\epsilon+H(\overline{Z}_{A_2 \cap A_1^c,1} \lvert   \overline{W}_{A_{2}\cap A_1^c,1}^{(2)})-H(\overline{Z}_{A_2 \cap A_1^c,1} \lvert   \overline{W}_{A_{2}\cap A_1^c,1}^{(2)}) \nonumber \\
	&=2n_1\epsilon,
\end{align}
where $(a)$ follows, since $\overline{Z}_{A_2,2}$ is independent of the rest of the terms in the expression, $(b)$ follows because $(\overline{Z}_{A_2 \cap A_1^c,1},~\overline{W}_{A_{2}\cap A_1^c,1}^{(2)}) \perp (\overline{W}_{A_{1},1}^{(1)}, \overline{Z}_{A_2 \cap A_1,1})$ and $(\overline{Z}_{A_2 \cap A_1^c,1},~\overline{W}_{A_{2}\cap A_1^c,1}^{(2)}) \perp (\overline{W}_{A_{1},1}^{(1)},~\overline{Z}_{A_2 \cap A_1,1},~\overline{W}_{A_{2}\cap A_1,1}^{(2)})$.

Hence we have from (\ref{first_term}) that $I \leq 6n_1\epsilon$. Thus we get,

\begin{eqnarray}
	I(\widehat{W}_2^{(1)};\widehat{Z}_2 \lvert \widehat{Z}_1,~\overline{X}_k^{(2)}) \leq 6n_1\epsilon, \nonumber \\ 
\end{eqnarray}
whence the lemma is established.
\end{proof}

\bibliographystyle{IEEEtran}
\bibliography{single_SHAHID_IEEE_TVT_MAC_WT_JAN_12_2017}
\end{document}